\newtheorem{theorem}{Theorem}
\newtheorem{lemma}{Lemma}
\newtheorem{definition}{Definition}
\newtheorem{corollary}{Corollary}
\newtheorem{remark}{Remark}
\newtheorem{proposition}{Proposition}
\newcommand{\mR}{\mathbb{R}}
\newcommand{\mN}{\mathbb{N}}
\newcommand{\mE}{\mathbb{E}}
\newcommand{\mZ}{\mathbb{Z}}
\newcommand{\mP}{\mathbb{P}}
\newcommand{\mS}{\mathbb{S}}
\newcommand{\cH}{\mathcal{H}}
\newcommand{\cF}{\mathcal{F}}
\newcommand{\cC}{\mathcal{C}}
\newcommand{\cP}{\mathcal{P}}
\newcommand{\cS}{\mathcal{S}}
\newcommand{\cR}{\mathcal{R}}
\newcommand{\ux}{\underline{x}}
\newcommand{\uxb}{\underline{x \grave{}}}
\newcommand{\pj}{\partial_{x_j}}
\newcommand{\pjb}{\partial_{{x \grave{}}_{j}}}
\newcommand{\px}{\partial_x}
\newcommand{\upx}{\partial_{\underline{x}}}
\newcommand{\upxb}{\partial_{\underline{{x \grave{}}} }}
\begin{document}
\title{Spherical harmonics and integration in superspace}
\author{H De Bie$^1$ and  F Sommen$^1$}
\address{$^1$ Clifford Research Group, Department of Mathematical Analysis, Faculty of Engineering, Ghent University, Galglaan 2, 9000 Gent, Belgium}
\ead{Hendrik.DeBie@UGent.be, fs@cage.ugent.be}
\begin{abstract}
In this paper the classical theory of spherical harmonics in $\mR^m$ is extended to superspace using techniques from Clifford analysis. After defining a super-Laplace operator and studying some basic properties of polynomial null-solutions of this operator, a new type of integration over the supersphere is introduced by exploiting the formal equivalence with an old result of Pizzetti. This integral is then used to prove orthogonality of spherical harmonics of different degree, Green-like theorems and also an extension of the important Funk-Hecke theorem to superspace. Finally, this integration over the supersphere is used to define an integral over the whole superspace and it is proven that this is equivalent with the Berezin integral, thus providing a more sound definition of the Berezin integral.
\end{abstract}
\pacs{02.30.-f,02.30.Px}
\ams{30G35, 58C50}
\submitto{\JPA}
\maketitle

\section{Introduction}

In the last decades there has been an increasing interest of mathematicians and physicists in theories studying so-called superspaces or supermanifolds. In fact, mainly two different approaches have been favored. First of all, there was the approach from algebraic geometry due to a.o. Berezin \cite{MR0208930,MR732126}, Leites \cite{MR565567} and Kostant \cite{MR0580292}, using sheaf theory to replace the structure sheaf of a differentiable manifold by a sheaf of $\mZ_2$-graded algebras.  Later developments with a perhaps more physical flavour, mainly due to DeWitt \cite{MR778559} and Rogers \cite{MR574696,MR848645}, favored an approach from differential geometry, modelling supermanifolds by glueing together spaces not anymore of the type $\mR^m$ but of the type $\mR^{p,q}_{\Lambda} = (\Lambda_0)^p \times (\Lambda_1)^q$ where $\Lambda = \Lambda_0 \oplus \Lambda_1$ is a Grassmann-algebra equipped with its natural $\mZ_2$-grading. 

Recently, we have started to explore a still different approach, namely from harmonic analysis and more specifically Clifford analysis (see \cite{MR697564,MR1169463,MR1130821}). Clifford analysis -- in its simplest form --  is a detailed study of the function theory of the Dirac operator (factorizing the Laplace-operator) in $\mR^m$ and allows for an elegant formulation of higher dimensional theories.  In fact, all dimensions are, so to speak, studied at the same time (see also further the introduction of the so-called super-dimension). The main features of Clifford analysis can be reformulated in a more abstract framework called radial algebra (see \cite{MR1472163}). This framework has several advantages; it is e.g. used in the algebraic analysis of systems of Dirac operators (see the book \cite{MR2089988} and references therein). Moreover, this framework allows us to construct a representation of Clifford analysis in superspace (see \cite{DBS1} and an earlier version in \cite{MR1771370}). This means that it is possible to introduce a set of operators on superspace such as a Dirac-operator, a Laplace-operator, Gamma-operators, etc. satisfying the axioms provided by radial algebra.

The aim of this paper is to further elaborate this new framework. First of all we will study polynomial null-solutions of the super-Laplace operator, thus generalizing the classical spherical harmonics to superspace. Then we will use an ancient formula of Pizzetti (see \cite{PIZZETTI}) to define integration of polynomials over the supersphere. This integral will turn out to have a lot of interesting properties. For example, orthogonality of spherical harmonics of different degree can immediately be proven. In this way also the so-called Funk-Hecke theorem can be generalized to superspace, providing us with the starting point for the study of super spherical integral transforms such as the spherical Fourier and Radon transforms. Finally we will use this integration over the supersphere to construct an integral over the whole superspace by generalizing the classical concept of integration in spherical co-ordinates, and we will prove that the result is equivalent to the Berezin integral.

The paper is organized as follows. We start with a short introduction to Clifford and harmonic analysis on superspace. In the following section we establish the basic properties of super-spherical harmonics, such as a Fischer-decomposition. We also determine the dimensions of spaces of spherical harmonics. Then we consider the problem of integrating polynomials over the unit-sphere in $\mR^m$ and give the solution of Pizzetti. This is used in the following section to define an integration over the so-called super-sphere. Furthermore, several properties of this integration are proven such as orthogonality of spherical harmonics and Green-like theorems. In the next section the Funk-Hecke theorem is generalized to superspace in a consistent manner. Finally the link with the Berezin integral is established.

\section{Clifford and harmonic analysis in superspace}

We consider the real algebra $\cP = Alg(x_i, e_i; {x \grave{}}_j,{e \grave{}}_j)$, $i=1,\ldots,m$, $j=1,\ldots,2n$
generated by

\begin{itemize}
\item $m$ commuting variables $x_i$ and orthogonal Clifford generators $e_i$
\item $2n$ anti-commuting variables ${x \grave{}}_i$ and symplectic Clifford generators ${e \grave{}}_i$
\end{itemize}
subject to the multiplication relations
\[ \left \{
\begin{array}{l} 
x_i x_j =  x_j x_i\\
{x \grave{}}_i {x \grave{}}_j =  - {x \grave{}}_j {x \grave{}}_i\\
x_i {x \grave{}}_j =  {x \grave{}}_j x_i\\
\end{array} \right .
\quad \mbox{and} \quad
\left \{ \begin{array}{l}
e_j e_k + e_k e_j = -2 \delta_{jk}\\
{e \grave{}}_{2j} {e \grave{}}_{2k} -{e \grave{}}_{2k} {e \grave{}}_{2j}=0\\
{e \grave{}}_{2j-1} {e \grave{}}_{2k-1} -{e \grave{}}_{2k-1} {e \grave{}}_{2j-1}=0\\
{e \grave{}}_{2j-1} {e \grave{}}_{2k} -{e \grave{}}_{2k} {e \grave{}}_{2j-1}=\delta_{jk}\\
e_j {e \grave{}}_{k} +{e \grave{}}_{k} e_j = 0\\
\end{array} \right .
\]
and where moreover all elements $e_i$, ${e \grave{}}_j$ commute with all elements $x_i$,${x \grave{}}_j$. For the motivation of these relations, we refer the reader to \cite{DBS1}, where the framework is also extended to include differential forms. The algebra $Alg(e_i; {e \grave{}}_j)$ generated by all the Clifford numbers $e_i, {e \grave{}}_j$ will be called $\cC$.

\noindent
The most important object in this algebra is the super--vector variable 
\[
x = \sum_{i=1}^m x_i e_i+ \sum_{j=1}^{2n} {x\grave{}}_j  {e \grave{}}_j.
\] 
Its square $x^2 = \sum_{j=1}^n {x\grave{}}_{2j-1} {x\grave{}}_{2j}  -  \sum_{j=1}^m x_j^2$ is scalar valued and thus generalizes the quantity $-r^2$ in $\mR^m$. The super--Dirac operator is defined as
\[
\px = \upxb-\upx = 2 \sum_{j=1}^{n} \left( {e \grave{}}_{2j} \partial_{{x\grave{}}_{2j-1}} - {e \grave{}}_{2j-1} \partial_{{x\grave{}}_{2j}}  \right)-\sum_{j=1}^m e_j \pj
\]
where $\partial_{{x \grave{}}_{i}}$ is a partial derivative with respect to an anti-commuting variable as in  e.g.\ \cite{MR565567}. A direct calculation shows that $\px x = m-2n= M$, where we encounter for the first time the so--called super--dimension $M$, taking over the r\^{o}le of the Euclidean dimension. Furthermore, when acting on $\cP$, the operators $x$ and $\px$ satisfy the following relation:
\begin{equation}
x \px + \px x  = 2 \mE + M.
\label{basicrel}
\end{equation}
Here $\mE=\sum_{j=1}^m x_j \pj +\sum_{j=1}^{2n} {x \grave{}}_j \pjb$ is the super--Euler operator, measuring the degree of homogeneous polynomials, and thus leading to the decomposition
\[
\cP = \sum_{k=0}^{\infty} \cP_k
\]
of the algebra $\cP$ in terms of the spaces $\cP_k$ of homogeneous polynomials of degree $k$.

\noindent
We can introduce a super-Laplace operator as
\[
\Delta = \px^2 =4 \sum_{j=1}^n \partial_{{x \grave{}}_{2j-1}} \partial_{{x \grave{}}_{2j}} -\sum_{j=1}^{m} \pj^2 .
\]
Finally we introduce the super-Gamma operator $\Gamma$ and the Laplace-Beltrami operator $\Delta_{LB}$ as
\begin{eqnarray*}
\Gamma &=& x \px - \mE\\
\Delta_{LB} &=& (M-2-\Gamma)\Gamma.
\end{eqnarray*}
One may easily calculate that

\[
\Delta_{LB} =x^2 \Delta - \mE(M-2 + \mE)
\]
which is completely similar to the classical expression with $M$ substituted for $m$. 

Now if we put $X = x^2/2, Y= - \Delta/2$ and $H = \mE + M/2$, then we can calculate the following commutators
\begin{eqnarray*}
[H,X] &=& 2X\\
\left[H,Y\right] &=& -2Y\\
\left[X,Y\right] &=& H
\end{eqnarray*}
proving that $X,Y$ and $H$ are the canonical generators of the Lie algebra $\mathfrak{sl}_2(\mR)$  and that we have indeed a representation of harmonic analysis in superspace (see e.g. \cite{MR1151617}). Similarly, the full Clifford analysis framework, including $x$ and $\px$, generates a representation of the superalgebra $\mathfrak{osp}(1|2)$. Note that e.g. in \cite{MR1658723}, the authors study the action of $\mathfrak{osp}(1|2)$ on the so-called $(2|2)$-dimensional supersphere in the context of DeWitt supermanifolds, including a decomposition into spherical harmonics.

In the sequel we will focus mostly on the harmonic analysis aspect and not so much on the Clifford analysis aspect, since in our opinion this probably is of more interest to an audience of physicists.

\section{Super spherical harmonics}

We begin with the following
\begin{definition}
A (super)-spherical harmonic of degree $k$ is a polynomial $H_k(x) \in \cP$ satisfying
\begin{eqnarray*}
\Delta H_k(x)&=&0\\
\mE H_k(x) &=& k H_k(x).
\end{eqnarray*}
The space of spherical harmonics of degree $k$ will be denoted by $\cH_k$.
\end{definition}

\noindent
We immediately have the following theorem concerning eigenfunctions of the Laplace-Beltrami operator:

\begin{theorem}
The space $\cH_k$ is an eigenspace of the operator $\Delta_{LB}$ corresponding to the eigenvalue $- k(M-2 + k)$.
\end{theorem}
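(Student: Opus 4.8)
The plan is to derive the eigenvalue directly from the explicit factorization of $\Delta_{LB}$ already supplied in the excerpt, rather than computing anything from scratch. We are given both $\Delta_{LB} = (M-2-\Gamma)\Gamma$ and the equivalent form $\Delta_{LB} = x^2 \Delta - \mE(M-2+\mE)$. Either expression immediately does the work once we feed in the two defining properties of a spherical harmonic $H_k \in \cH_k$, namely $\Delta H_k = 0$ and $\mE H_k = k H_k$.

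First I would take the form $\Delta_{LB} = x^2 \Delta - \mE(M-2+\mE)$, since it is phrased directly in terms of $\Delta$ and $\mE$, the two operators that act simply on $H_k$. Applying it to $H_k$, the term $x^2 \Delta H_k$ vanishes by harmonicity, so we are left with $\Delta_{LB} H_k = -\mE(M-2+\mE)H_k$. Next I would use homogeneity: since $\mE H_k = k H_k$, the operator $(M-2+\mE)$ acts on $H_k$ as the scalar $(M-2+k)$, and a further application of $\mE$ contributes another factor $k$. Care is needed here to justify that $\mE$ and the constant $M-2$ may be replaced by scalars inside the composite operator; this is legitimate because $H_k$ is an eigenfunction of $\mE$ and $(M-2+\mE)H_k$ is again proportional to $H_k$, hence still an eigenvector of $\mE$ with eigenvalue $k$. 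Collecting the factors yields $\Delta_{LB} H_k = -k(M-2+k)H_k$, which is precisely the claimed eigenvalue.

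I would present the argument symmetrically using the other factorization as a sanity check: writing $\Gamma = x\px - \mE$ and noting that on $\cH_k$ we have $\Gamma H_k = k H_k$ as well (this follows from the basic relation $x\px + \px x = 2\mE + M$ together with $\px H_k$ being harmonic of degree $k-1$, so that $\px x H_k$ and $x \px H_k$ combine appropriately). Then $(M-2-\Gamma)\Gamma H_k = (M-2-k)(k)H_k = -k(M-2+k)H_k$, in agreement. This cross-check confirms the computation and highlights that $\cH_k$ is simultaneously an eigenspace of $\Gamma$.

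The only genuinely delicate point, and the step I expect to require the most care, is the operator-ordering bookkeeping: since $\mE$, $x^2$, $\Delta$ and $\Gamma$ do not all commute, one must verify that acting on the \emph{specific} eigenfunction $H_k$ justifies replacing operators by their eigenvalues in the stated order. Once it is observed that each intermediate result remains a scalar multiple of $H_k$ (so that successive operators see an honest eigenvector at every stage), the substitution is rigorous and the computation is otherwise routine. No further structural input beyond the two given formulas for $\Delta_{LB}$ and the defining conditions on $\cH_k$ is needed.
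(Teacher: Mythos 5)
Your main argument is exactly the paper's proof: apply $\Delta_{LB} = x^2\Delta - \mE(M-2+\mE)$ to $H_k$, kill the first term by harmonicity, and read off the eigenvalue from $\mE H_k = kH_k$; the paper's proof is precisely this two-line computation, and your care about operator ordering is fine since $(M-2+\mE)H_k$ is again a multiple of $H_k$.

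One caution about your ``sanity check'': the claim $\Gamma H_k = k H_k$ is false in general. The operator $\Gamma = x\px - \mE$ is \emph{not} scalar on $\cH_k$; writing $\cH_k = \cM_k \oplus x\,\cM_{k-1}$ with $\cM_k$ the monogenics ($\px M_k = 0$), one finds $\Gamma M_k = -k\,M_k$ while $\Gamma(xM_{k-1}) = (k+M-2)\,xM_{k-1}$, using $\px x + x\px = 2\mE + M$. The cross-check still goes through because the quadratic $(M-2-\lambda)\lambda$ takes the same value $-k(M-2+k)$ at both $\lambda = -k$ and $\lambda = k+M-2$ --- indeed that is exactly why $\Delta_{LB}$ is defined as this particular quadratic in $\Gamma$ --- but the intermediate assertion as you state it would need to be replaced by this two-eigenvalue analysis. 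Since the check is redundant, the simplest fix is to drop it or to restrict it to the monogenic decomposition.
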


\begin{proof}
\begin{eqnarray*}
\Delta_{LB} H_k(x) &=& x^2 \Delta H_k(x) - \mE(M-2 + \mE) H_k(x)\\
&=& - k(M-2 + k) H_k(x).
\end{eqnarray*}
\end{proof}
\noindent
We note that the result is completely similar to the classical case, upon replacing $m$ by $M$.

\begin{theorem}
The super-Laplace operator is a surjective operator on $\cP$ if and only if $m \neq 0$.
\end{theorem}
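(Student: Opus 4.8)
The plan is to split the super-Laplace operator as $\Delta = \Delta_b + \Delta_f$, where $\Delta_b = -\sum_{j=1}^m \pj^2$ is the ordinary Laplacian in the commuting variables and $\Delta_f = 4\sum_{j=1}^n \partial_{{x \grave{}}_{2j-1}} \partial_{{x \grave{}}_{2j}}$ is the purely anticommuting part. Since $\Delta$ commutes with all the Clifford generators and acts only on the variables, it suffices to study $\Delta$ on the polynomial algebra $\mR[x_1,\ldots,x_m] \otimes \Lambda_{2n}$, where $\Lambda_{2n}$ is the Grassmann algebra generated by the ${x \grave{}}_j$; surjectivity there is equivalent to surjectivity on all of $\cP$, because $\Delta$ acts coefficient-wise on the finite-dimensional factor $\cC$. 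The two structural facts I would isolate are: (i) $\Delta_f$ strictly lowers the degree in the anticommuting variables by $2$ and is therefore nilpotent, with $\Delta_f^{\,n+1}=0$; and (ii) $\Delta_b$ preserves the fermionic degree and acts coefficient-wise in the ${x \grave{}}_j$.

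For the \emph{only if} direction, suppose $m=0$. Then $\Delta=\Delta_f$ is nilpotent by (i). A nilpotent operator on a nonzero vector space can never be surjective: if $\Delta$ were onto, then so would be the composite $\Delta^{\,n+1}=0$, which is absurd on the nonzero algebra $\cP$. Hence $\Delta$ is not surjective when $m=0$.

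For the \emph{if} direction, assume $m \neq 0$. The first ingredient is the classical fact that $\Delta_b$ is surjective on $\mR[x_1,\ldots,x_m]$: using the Fischer decomposition of each homogeneous component into $r^2$-multiples of ordinary harmonics, together with the identity $\Delta_b(r^2 H_\ell) = -(2m+4\ell)H_\ell$ for a harmonic $H_\ell$ of degree $\ell$, one sees that $\Delta_b$ maps homogeneous polynomials of degree $k$ in the $x_i$ onto those of degree $k-2$ as soon as $m\geq 1$. The second ingredient is the triangular structure with respect to the fermionic grading. Writing any target $g$ and candidate solution $f$ as sums $\sum_p g_p$, $\sum_p f_p$ of their components of fermionic degree $p$, the equation $\Delta f = g$ decouples into $\Delta_b f_p + \Delta_f f_{p+2} = g_p$ for each $p$, since $\Delta_f f_{p+2}$ has fermionic degree $p$. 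I would solve this system by downward recursion on $p$: at the top degrees $p=2n$ and $p=2n-1$ there is no term $f_{p+2}$, so $\Delta_b f_p = g_p$ is solvable by surjectivity of $\Delta_b$; at each lower stage $\Delta_f f_{p+2}$ is already determined, and one solves $\Delta_b f_p = g_p - \Delta_f f_{p+2}$, again by surjectivity of $\Delta_b$. As there are only finitely many fermionic degrees, the recursion terminates and yields an $f$ with $\Delta f = g$.

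The main obstacle is arranging the argument so that the nilpotency in (i) is genuinely exploited, because $\Delta_b$ is only surjective, not invertible (its kernel consists of the harmonics), so one cannot simply form a Neumann-type inverse $\Delta_b^{-1}(I+\Delta_b^{-1}\Delta_f)^{-1}$. The downward recursion on the Grassmann grading sidesteps this by requiring only a (non-unique) right inverse of $\Delta_b$ at each stage, and the finiteness of the fermionic grading guarantees that the process closes off after finitely many steps. A minor point to verify is that $\Delta_b$ stays surjective when applied to polynomials with Grassmann-valued coefficients, which is immediate since it acts coefficient-wise.
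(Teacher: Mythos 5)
Your proof is correct, and it is considerably more explicit than what the paper offers: the paper's entire proof is a one-line reference to the classical result (Lemma 3.1.2 in Groemer's book), leaving the reduction from the super case to the purely bosonic case implicit. Your argument supplies exactly that missing reduction. The ``only if'' direction via nilpotency of $\Delta_f$ (a nilpotent operator on a nonzero module cannot be surjective) is clean and correct, and the ``if'' direction via the triangular system $\Delta_b f_p + \Delta_f f_{p+2} = g_p$, solved by downward recursion on the fermionic degree, is the right way to exploit the finiteness of the Grassmann grading while only using a right inverse of $\Delta_b$ rather than a genuine inverse --- your remark that a Neumann-series inversion is unavailable because $\Delta_b$ has a large kernel is well taken. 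Two minor points: first, for the surjectivity of $\Delta_b$ you quote only the identity $\Delta_b(r^2H_\ell) = -(2m+4\ell)H_\ell$, whereas the recursion actually needs the iterated version $\Delta_b(r^{2t}H_\ell) = -2t(2\ell+m+2t-2)\,r^{2t-2}H_\ell$ with the coefficient nonvanishing for $m\geq 1$; this follows by iterating the stated identity and is standard, but it is the step where the hypothesis $m\neq 0$ enters quantitatively. Second, the classical Fischer decomposition in the bosonic variables must be established independently of the surjectivity you are deriving from it (e.g.\ via the Fischer inner product, under which multiplication by $r^2$ is adjoint to $\Delta_b$ and injective); as long as you take that as the known classical input --- which is precisely what the paper cites --- there is no circularity. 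What your approach buys over the paper's is a self-contained proof that makes visible why the fermionic variables cause no obstruction when $m\neq 0$ and are the sole obstruction when $m=0$.
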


\begin{proof}
This follows immediately from the classical proof. See e.g. Lemma 3.1.2 in \cite{MR1412143}.
\end{proof}

\noindent
The previous theorem allows us to compute the dimensions (i.e. the rank as a free $\cC$-module) of the spaces $\cH_k$. Indeed we have the following 

\begin{corollary}
One has that
\[
\dim \cH_k = \dim \cP_k - \dim \cP_{k-2}
\]
where
\[
\dim  \cP_k = \sum_{i=0}^{min(k,2n)}\left( \begin{array}{c} 2n\\i \end{array}\right) \left( \begin{array}{c} k-i+m-1\\m-1 \end{array}\right)
\]
and $\dim \cP_{-1} = 0 = \dim \cP_{-2}$ by definition.
\end{corollary}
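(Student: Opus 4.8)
The plan is to establish the two assertions separately: the difference formula from surjectivity of the super-Laplace operator, and the explicit value of $\dim \cP_k$ from a direct count of monomials. For the first, I would observe that $\cH_k$ is precisely the kernel of $\Delta$ restricted to $\cP_k$, and that $\Delta$, being built from second-order derivatives, lowers the degree by two, hence maps $\cP_k$ into $\cP_{k-2}$. Since the preceding theorem makes $\Delta$ surjective on $\cP$ (we assume $m \neq 0$), and $\Delta$ is homogeneous of degree $-2$, this surjectivity restricts to a surjection $\cP_k \to \cP_{k-2}$ for every $k$: given a homogeneous $q \in \cP_{k-2}$, only the degree-$k$ component of any preimage can contribute, so that component already maps onto $q$. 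A rank–nullity argument for $\Delta|_{\cP_k}$ then yields $\dim \cH_k = \dim \cP_k - \dim \cP_{k-2}$, with the conventions $\dim \cP_{-1} = \dim \cP_{-2} = 0$ absorbing the low-degree cases.

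For the value of $\dim \cP_k$, I would count a $\cC$-basis of $\cP_k$. Every monomial factors uniquely into a commuting part in the $x_i$ and an anti-commuting part in the ${x \grave{}}_j$; if the anti-commuting part has degree $i$, then the commuting part has degree $k-i$. Because each ${x \grave{}}_j$ squares to zero, the anti-commuting part is squarefree and is determined by choosing which $i$ of the $2n$ variables occur, a subset count, while the commuting part of degree $k-i$ in $m$ variables is counted by stars and bars. Multiplying these two counts and summing over $i$ from $0$ to $\min(k,2n)$ produces exactly the stated formula.

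The main subtlety, and the point I would treat most carefully, is the module-theoretic meaning of dimension: here $\dim$ denotes the rank as a free $\cC$-module rather than a dimension over a field, so the rank–nullity step needs justification. The cleanest route is to note that $\cP$ splits as $\cC \otimes_{\mR} R$, where $R$ is the scalar superpolynomial algebra generated by the commuting and anti-commuting variables alone, and that $\Delta$—a pure differential operator commuting with left multiplication by every element of $\cC$—acts only on the factor $R$, i.e.\ as $1 \otimes \Delta_R$. Consequently the rank over $\cC$ of any $\Delta$-stable piece equals the ordinary real dimension of the corresponding scalar object, and both the surjectivity of $\Delta|_{\cP_k}$ and the rank–nullity identity reduce to the familiar finite-dimensional statements over $\mR$ (surjectivity of $1 \otimes \Delta_R$ being equivalent to that of $\Delta_R$ since $\cC \neq 0$). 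Once this reduction is in place, the two paragraphs above combine at once to give the corollary.
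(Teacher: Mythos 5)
Your argument is correct and follows exactly the route the paper intends: the paper's proof simply says the result is ``immediate, using the surjectivity of $\Delta$'' together with ``a simple counting argument'' for $\dim \cP_k$, and you have supplied precisely those two steps (surjectivity of $\Delta|_{\cP_k} \colon \cP_k \to \cP_{k-2}$ plus rank--nullity, and the subset/stars-and-bars count). Your extra care about $\dim$ meaning rank as a free $\cC$-module, handled via the splitting $\cP \cong \cC \otimes_{\mR} R$ with $\Delta$ acting as $1 \otimes \Delta_R$, is a detail the paper leaves implicit but is entirely consistent with it.
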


\begin{proof}
The proof is immediate, using the surjectivity of $\Delta$. The dimension of $\cP_k$ follows by a simple counting argument.
\end{proof}

\noindent
The case $m=0$ is slightly more complicated, because $\Delta$ is clearly not surjective in that case. We refer the reader to our more detailed paper \cite{DBS2}.

\noindent
Next we have the following basic formula, where $R_k$ is a homogeneous polynomial of degree $k$:
\begin{equation}
\Delta(x^2 R_k) = (4k+2M) R_k + x^2 \Delta R_k.
\label{basicequation}
\end{equation}
This formula follows from the fact that $ [ \Delta,  x^2 ] =  4 \mE + M$.

\noindent 
In the sequel we will also need the following formulae for iterated actions of the Laplace operator.

\begin{lemma}
One has the following relations:

\begin{equation*}
\begin{array}{llcl}
(i)&\Delta (x^{2t} R_{k})&=& 2t(2k+M+2t-2) x^{2t-2} R_k + x^{2t} \Delta R_k\\
(ii)&\Delta^{t+1}(x^2 R_{2t}) &=& 4(t+1)(M/2+t) \Delta^{t}( R_{2t})\\
(iii)&\Delta^{t+1}(x R_{2t+1}) &=& 2(t+1) \Delta^{t} \px( R_{2t+1}).
\end{array}
\end{equation*}
\label{relationslaplace}
\end{lemma}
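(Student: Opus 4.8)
The plan is to establish all three identities by induction on $t$, with part (i) serving as the workhorse from which (ii) and (iii) follow by iteration. For (i) I would induct on $t$, the case $t=1$ being exactly formula~(\ref{basicequation}). For the inductive step I write $x^{2t}R_k = x^2\bigl(x^{2t-2}R_k\bigr)$, note that $x^{2t-2}R_k$ is homogeneous of degree $2t-2+k$, and apply~(\ref{basicequation}) to it; substituting the inductive hypothesis for $\Delta(x^{2t-2}R_k)$ then reduces matters to the routine check that the two coefficients of $x^{2t-2}R_k$ combine to $2t(2k+M+2t-2)$. Alternatively, one may read (i) off directly from the $\mathfrak{sl}_2$ relations $[X,Y]=H$, $[H,X]=2X$ established above: the standard enveloping-algebra identity $[Y,X^{t}]=-tX^{t-1}(H+t-1)$ translates, under $X=x^2/2$, $Y=-\Delta/2$, $H=\mE+M/2$, into precisely (i) when applied to $R_k$.

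With (i) in hand, part (ii) follows by a second induction on $t$, the base case $t=0$ reading $\Delta(x^2R_0)=2MR_0$, which is immediate from~(\ref{basicequation}). For the step I apply $\Delta^t$ to the identity $\Delta(x^2R_{2t})=(8t+2M)R_{2t}+x^2\Delta R_{2t}$; the first term yields $(8t+2M)\Delta^tR_{2t}$, while for the second I set $R'=\Delta R_{2t}$ (homogeneous of degree $2(t-1)$) and invoke the inductive hypothesis $\Delta^{t}(x^2R')=4t(M/2+t-1)\Delta^{t-1}R'$, which rewrites $\Delta^t(x^2\Delta R_{2t})$ as $4t(M/2+t-1)\Delta^tR_{2t}$. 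Summing the two contributions leaves only the arithmetic verification that $(8t+2M)+4t(M/2+t-1)=4(t+1)(M/2+t)$.

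Part (iii) is where I expect the main obstacle, since the odd vector variable $x$ (rather than $x^2$) appears and one must work with $\px$ directly. The crucial preliminary step is to derive the Dirac-analogue of~(\ref{basicequation}), namely
\[
\Delta(x R_{2t+1}) = 2\,\px R_{2t+1} + x\,\Delta R_{2t+1}.
\]
I would obtain this by applying $\px$ twice and using, at each stage, the consequence $\px(x P_j)=(2j+M)P_j - x\,\px P_j$ of the fundamental relation~(\ref{basicrel}) for a homogeneous $P_j$ of degree $j$; the cross terms then cancel and leave exactly the coefficient $2$.

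Granting this formula, (iii) follows by induction on $t$. Applying $\Delta^t$ gives $2\Delta^t\px R_{2t+1}+\Delta^t(x\,\Delta R_{2t+1})$; since $\Delta R_{2t+1}$ is homogeneous of degree $2(t-1)+1$, the inductive hypothesis applies to the second term, and using that $\px$ commutes with $\Delta$ (both being powers of $\px$) one rewrites $\Delta^{t-1}\px\Delta=\Delta^t\px$, so the two terms combine into $2(t+1)\Delta^t\px R_{2t+1}$. The only genuine subtlety throughout is keeping track of signs and of the odd/even parity of the Clifford-valued coefficients when commuting $\px$ past $x$; once the signed product rule above is in place, the remainder is bookkeeping.
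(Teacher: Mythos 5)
Your proof is correct and follows essentially the same route as the paper: induction on $t$ via formula~(\ref{basicequation}) for (i), and iteration of (\ref{basicequation}) and of $\Delta(xR_{2t+1})=2\px R_{2t+1}+x\Delta R_{2t+1}$ for (ii) and (iii), merely organized as inductions rather than telescoping sums. Your explicit derivation of the latter identity from $\px(xP_j)=(2j+M)P_j-x\px P_j$ (a consequence of (\ref{basicrel})) is a welcome detail that the paper only asserts.
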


\begin{proof}

The first formula is easily proven using induction on $t$ and formula (\ref{basicequation}).

\noindent
Iterating formula (\ref{basicequation}) gives

\begin{eqnarray*}
\Delta^{t+1}(x^2 R_{2t}) &=&\Delta^{t} \left( 2(4t+M) R_{2t} + x^{2} \Delta R_{2t} \right)\\
&=&  2 \left(  (4t+M) +  (4t+M-4) + \ldots +  (4+M) + M \right)\\
&& \times \Delta^{t} R_{2t}\\
&=&  2( \sum_{i=0}^t M + 4 \sum_{i=0}^t (t- i)) \Delta^{t} R_{2t}\\
&=&   2((t+1) M + 4 t(t+1) -4 t(t+1)/2) \Delta^{t} R_{2t}\\
&=&   4(t+1)(M/2+t)  \Delta^{t} R_{2t}
\end{eqnarray*}
thus proving the second statement.

Finally let us prove the last formula. We know that

\[
\Delta (x R_{2t+1}) = 2 \px R_{2t+1} + x \Delta R_{2t+1}
\]
and so
\begin{eqnarray*}
\Delta^{t+1}(x R_{2t+1}) &=& \Delta^{t} (2 \px R_{2t+1} + x \Delta R_{2t+1})\\
&=& (\sum_{i=0}^{t+1} 2) \; \px^{2t+1}( R_{2t+1}) \\
&=& 2(t+1) \px^{2t+1}( R_{2t+1}).
\end{eqnarray*}
\end{proof}

\noindent
The previous lemma leads to:

\begin{corollary}
Let $H_k \in \cH_k$ and $M \not \in -2 \mN$. Then
\begin{eqnarray*}
\Delta^i(x^{2j} H_k) &=& c_{i,j,k} x^{2j-2i} H_k, \quad i \leq j\\
&=&0, \quad i > j
\end{eqnarray*}
with
\[
c_{i,j,k} = 4^{i} \frac{j!}{(j-i)!} \frac{\Gamma(k+M/2+j)}{\Gamma(k+M/2+j-i)}.
\]
\label{laplonpieces}
\end{corollary}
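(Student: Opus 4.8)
The plan is to prove both statements by induction on $i$, letting part (i) of Lemma~\ref{relationslaplace} do all the work together with the harmonicity of $H_k$. Since $\Delta H_k = 0$, substituting $R_k = H_k$ into formula (i) kills the term $x^{2t}\Delta R_k$ and leaves the clean single-step relation
\[
\Delta(x^{2t} H_k) = 2t\,(2k+M+2t-2)\, x^{2t-2} H_k .
\]
This is the engine of the whole argument: each application of $\Delta$ lowers the power of $x^2$ by one and multiplies by an explicit scalar, because $x^{2t-2}H_k$ is again of the form $x^{2s}R_k$ with the same harmonic $R_k = H_k$. The degrees are consistent throughout, since $x^{2j}H_k$ is homogeneous of degree $2j+k$ and $\Delta^i$ lowers the degree by $2i$, matching $x^{2j-2i}H_k$.

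For the case $i \le j$, I would apply the single-step formula $i$ times starting from $t=j$, obtaining the telescoping product
\[
\Delta^i(x^{2j} H_k) = \left( \prod_{l=0}^{i-1} 2(j-l)\,(2k+M+2(j-l)-2) \right) x^{2j-2i} H_k .
\]
The remaining task is to identify this product with the claimed coefficient. The factors $2(j-l)$ multiply to $2^i\, j!/(j-i)!$, while rewriting $2k+M+2(j-l)-2 = 2(k+M/2+j-l-1)$ contributes another $2^i$ together with the descending product $(k+M/2+j-1)(k+M/2+j-2)\cdots(k+M/2+j-i)$, which is exactly $\Gamma(k+M/2+j)/\Gamma(k+M/2+j-i)$. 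Collecting the two factors of $2^i$ produces the $4^i$ prefactor, so one recovers $c_{i,j,k}$.

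For the case $i > j$, I would observe that after $j$ applications the power of $x^2$ is exhausted, so that $\Delta^j(x^{2j} H_k) = c_{j,j,k} H_k$. Applying $\Delta$ once more gives $c_{j,j,k}\,\Delta H_k = 0$, whence $\Delta^i(x^{2j}H_k)=0$ for every $i>j$.

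I do not expect a serious obstacle, as the computation is essentially routine; the only points requiring care are the bookkeeping in the telescoping product and the role of the hypothesis $M \not\in -2\mN$. This hypothesis is precisely what guarantees that the Gamma functions $\Gamma(k+M/2+j)$ and $\Gamma(k+M/2+j-i)$ have no poles — equivalently, that none of the factors $2k+M+2(j-l)-2$ vanishes — so that the closed form $c_{i,j,k}$ is well-defined. Without it the result would have to be stated through the finite product rather than the Gamma ratio.
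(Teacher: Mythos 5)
Your proof is correct and follows exactly the route the paper intends: the paper's own proof consists of the single line ``By several applications of lemma \ref{relationslaplace}'', and your argument is precisely the iterated application of part (i) of that lemma with the $x^{2t}\Delta R_k$ term killed by harmonicity, followed by the telescoping-product bookkeeping. The identification of the product with $c_{i,j,k}$ and the vanishing for $i>j$ are both verified correctly.
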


\begin{proof}
By several applications of lemma \ref{relationslaplace}.
\end{proof}

Using this formula we can prove the following Fischer-decomposition of super-polynomials (for the classical case see e.g. \cite{MR0229863}):

\begin{theorem}[Fischer-decomposition]
Suppose $M \not \in -2 \mN$. Then $\cP_k$ decomposes as

\begin{equation}
\cP_k = \bigoplus_{i=0}^{\left\lfloor \frac{k}{2} \right\rfloor} x^{2i} \cH_{k-2i}.
\end{equation}
\end{theorem}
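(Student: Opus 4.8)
The plan is to establish the decomposition in three stages: first observe that the right-hand side lies in $\cP_k$, then prove the sum is direct, and finally use a rank count to show it exhausts $\cP_k$. The containment is immediate, since $x^2$ is a scalar of degree two (recall $x^2 = \sum_{j=1}^n {x\grave{}}_{2j-1}{x\grave{}}_{2j} - \sum_{j=1}^m x_j^2$), so each summand $x^{2i}\cH_{k-2i}$ consists of homogeneous polynomials of degree $k$ and therefore sits inside $\cP_k$.

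For directness I would argue by downward induction using Corollary \ref{laplonpieces}. Suppose $\sum_{i=0}^{N} x^{2i} H_{k-2i} = 0$ with $N = \lfloor k/2 \rfloor$ and each $H_{k-2i}\in\cH_{k-2i}$. Applying $\Delta^{N}$ and invoking Corollary \ref{laplonpieces}, every term with $i<N$ is annihilated (the number of Laplacians then exceeds the power of $x^2$), leaving only $c_{N,N,k-2N}\, H_{k-2N} = 0$. The crucial point is that the constant is nonzero: read as a Pochhammer product it equals $4^{N} N! \prod_{\ell=0}^{N-1}\left(k-2N+\ell+M/2\right)$, and a factor vanishes only if $M = -2(k-2N+\ell)$ with $k-2N+\ell \geq 0$, i.e.\ only if $M\in -2\mN$, which the hypothesis excludes. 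Hence $H_{k-2N}=0$; applying $\Delta^{N-1}$ to the remaining relation kills $H_{k-2N+2}$, and so on down to $H_k$. The identical computation with the two exponents equal shows each map $H\mapsto x^{2i}H$ is injective on $\cH_{k-2i}$, so $x^{2i}\cH_{k-2i}\cong\cH_{k-2i}$ as $\cC$-modules.

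To see that the direct sum fills $\cP_k$ I would compare ranks. By the dimension Corollary, $\dim\cH_{k-2i} = \dim\cP_{k-2i}-\dim\cP_{k-2i-2}$, so the telescoping sum gives $\sum_{i=0}^{N}\dim\cH_{k-2i} = \dim\cP_k - \dim\cP_{k-2N-2}$, and since $k-2N-2\in\{-1,-2\}$ the subtracted term is zero by the stated convention. Thus $\bigoplus_{i=0}^{N} x^{2i}\cH_{k-2i}$ is a free $\cC$-submodule of $\cP_k$ of the same rank. Passing to real dimensions (each rank multiplied by $\dim_{\mR}\cC$) exhibits it as an $\mR$-subspace of $\cP_k$ of full dimension, hence equal to all of $\cP_k$; this side-steps any subtlety about full-rank submodules over the non-field $\cC$.

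The main obstacle is not the algebra but the bookkeeping at the two places where $M\not\in -2\mN$ is genuinely used: guaranteeing that the constants $c_{i,i,k-2i}$ never vanish (securing both directness and injectivity) and ensuring the telescoping dimension identity is not spoiled. I would also take care to note that $M\not\in -2\mN$ forces $m\neq 0$ (since $m=0$ gives $M=-2n\in -2\mN$), so that the surjectivity Theorem and the dimension Corollary resting on it are genuinely available, which is precisely what makes the clean rank count legitimate.
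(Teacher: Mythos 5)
Your proposal is correct and follows essentially the same route as the paper: directness of the sum via the constants $c_{i,j,k}$ from Corollary \ref{laplonpieces} (nonvanishing precisely because $M \not\in -2\mN$), followed by the telescoping dimension count $\sum_i (\dim \cP_{k-2i} - \dim \cP_{k-2i-2}) = \dim \cP_k$. Your downward-induction argument for directness and the remark on ranks versus real dimensions merely spell out more carefully what the paper asserts in one line.
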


\begin{proof}
First note that the spaces $x^{2i} \cH_{k-2i}$ are all disjoint. This follows from the fact that the space $x^{2i} \cH_{k-2i}$ is the space of polynomials that are null-solutions of 
$\Delta^{i+1}$ but not of $\Delta^{i}$ (see corollary \ref{laplonpieces}). Note that this is only true if $M \not \in -2 \mN$.

\noindent
Now we calculate
\begin{eqnarray*}
\dim \bigoplus_{i=0}^{\left\lfloor \frac{k}{2} \right\rfloor} x^{2i} \cH_{k-2i} &=& \sum_{i=0}^{\left\lfloor \frac{k}{2} \right\rfloor} \dim \cH_{k-2i}\\
&=&\sum_{i=0}^{\left\lfloor \frac{k}{2} \right\rfloor} (\dim \cP_{k-2i}-\dim \cP_{k-2i-2})\\
&=& \dim \cP_k
\end{eqnarray*}
which completes the proof.
\end{proof}

\begin{remark}
It is possible to refine the previous decomposition to spherical monogenics i.e. polynomial null-solutions of the super-Dirac operator, see our paper \cite{DBS2}. 
\end{remark}

\vspace{3mm}
It is also possible to explicitly determine the Fischer-decomposition of the space $\cP_k$. This amounts to constructing projection operators $\mP_i, i=0, \ldots, \left\lfloor \frac{k}{2} \right\rfloor$ satisfying
\begin{equation}
\mP_i (x^{2j} \cH_{k-2j}) = \delta_{ij} \cH_{k-2j}.
\label{projoperators}
\end{equation}
It is immediately clear that $\mP_i$ has to be of the following form
\[
\mP_i = \sum_{j=0}^{\lfloor k/2 \rfloor -i} a_j x^{2j}\Delta^{i+j}.
\]
The coefficients $a_j$ can be determined by expressing formula (\ref{projoperators}) as a set of equations in the $a_j$. Using induction one can solve this set of equations resulting in
\[
a_j = \frac{(-1)^j}{4^{j+i} j! i!} (k-2i+M/2-1) \frac{\Gamma(k-2i-j-1+M/2)}{\Gamma(k-i+M/2)}.
\]
In the special case where $k=2t$ the projection on $\cH_0$ takes the following form
\begin{equation}
\mP_t = \frac{1}{4^t t!} \frac{\Gamma(\frac{M}{2})}{\Gamma(t+\frac{M}{2})} \Delta^t
\label{projectiononconstants}
\end{equation}
which we will need later on.

\section{Integration over the sphere in $\mR^m$: Pizzetti's formula}

Although not very well known, there exist explicit and easy formulae to calculate the integral of an arbitrary polynomial over the unit-sphere in $\mR^m$, see for example the recent papers \cite{MR1426416,MR1837866}. However, it is not obvious how to extend  in a consistent manner these formulae to superspace. Therefore we will use an old result of Pizzetti, see \cite{PIZZETTI}, expressing the integration over the sphere as an infinite sum of powers of the Laplace operator (see formula (\ref{pizformula})). As it is relevant to the sequel and moreover not easily accessible in the existing literature, we give a quick proof of this formula.

\vspace{2mm}
\noindent
We want a formula to calculate
\[
\int_{\partial B(0,1)} R d\sigma
\]
with $R$ an arbitrary polynomial, $\partial B(0,1)$ the unit-sphere in $\mR^m$ and $d \sigma$ the classical Lebesgue surface measure. 

\vspace{2mm}
\noindent
We consider two cases:

\noindent
1) $R = R_{2k}$ a homogeneous polynomial of even degree $2k$

We can calculate

\begin{eqnarray*}
\int_{\partial B(0,1)} R_{2k} d\sigma &=& \int_{\partial B(0,1)} \sum_{i=0}^{k} x^{2i} H_{2k-2i} d\sigma\\
&=& \sum_{i=0}^{k} \int_{\partial B(0,1)}  x^{2i} H_{2k-2i}d\sigma\\
&=&\sum_{i=0}^{k} (-1)^i \int_{\partial B(0,1)} H_{2k-2i}d\sigma\\
&=& (-1)^k \int_{\partial B(0,1)}   H_{0} d\sigma\\
&=& (-1)^k H_{0} \frac{2 \pi^{m/2}}{\Gamma(m/2)}.
\end{eqnarray*}
In this calculation we have used the Fischer-decomposition and the orthogonality of spherical harmonics of different degree on $\partial B(0,1)$. We can determine $H_0$ by using the projection operator (\ref{projectiononconstants}) where $M=m$. This gives 
\[
H_0 = \frac{1}{2^{2k} k!} \frac{\Gamma(m/2)}{\Gamma(k+m/2)} \Delta^{k} R_{2k}.
\]
We conclude that

\[
\int_{\partial B(0,1)} R_{2k} d\sigma = (-1)^k \frac{2 \pi^{m/2}}{2^{2k} k!\Gamma(k+m/2)} \Delta^{k} R_{2k}.
\]

\noindent
2) $R = R_{2k+1}$ a homogeneous polynomial of odd degree $2k+1$

We have that
\[
\int_{\partial B(0,1)} R_{2k+1} d\sigma=0
\]
using the same reasoning or using a symmetry argument.

Both cases can be summarized in one formula. So let $R$ be an arbitrary polynomial, then

\begin{equation}
\int_{\partial B(0,1)} R  d\sigma =  \sum_{k=0}^{\infty} (-1)^k \frac{2 \pi^{m/2}}{2^{2k} k!\Gamma(k+m/2)} (\Delta^{k} R )(0)
\label{pizformula}
\end{equation}
where the right-hand side has to be evaluated in the origin of $\mR^m$.

\begin{remark}
The formula of Pizzetti can be written in a more elegant form using Bessel functions as follows:
\[
\int_{\partial B(0,1)} R d\sigma=  2 \pi^{m/2} (P_{\frac{m}{2}-1}(\px) R) (0)
\]
with 
\[
P_{\frac{m}{2}-1}(z) = (\frac{z}{2})^{1-\frac{m}{2}} J_{\frac{m}{2}-1} (z)
\]
a kind of `normalized' Bessel function and $\px$ the Dirac-operator.
\end{remark}

\section{Integration over the supersphere}

The formula obtained in the previous section seems promising to define an integral over the formal object $x^2=-1$ which we call the supersphere. Indeed, we simply replace $m$ with $M$ to obtain 

\begin{definition}
The integral of a superpolynomial $R$ over the supersphere is given by

\begin{equation}
\int_{SS} R =  \sum_{k=0}^{\infty} (-1)^k \frac{2 \pi^{M/2}}{2^{2k} k!\Gamma(k+M/2)} (\Delta^{k} R)(0).
\label{defintss}
\end{equation}
\label{definitionintss}
\end{definition}
This integral is a linear functional which maps the space $\cP$ of super-polynomials into $\cC$.

\noindent
In the case where $M= -2t$, the first terms in the summation vanish and the formula reduces to:
\[
\int_{SS, M= -2t } R =  \sum_{k=\textbf{t+1}}^{\infty} (-1)^k \frac{2 \pi^{M/2}}{2^{2k} k!\Gamma(k+M/2)} (\Delta^{k} R)(0).
\]
Furthermore we define the area of the supersphere as
\[
\sigma^M = \int_{SS} 1 = \frac{2 \pi^{M/2}}{\Gamma(M/2)}.
\]

\vspace{3mm}
\begin{remark}
This definition has some strange consequences:
\begin{itemize}
\item in the purely fermionic case $\int_{SS} =0$, so there is no integral (see also remark \ref{otherdef})
\item in case $M= -2t-1$ the area of the supersphere can be negative (compare with the graph of the Gamma-function)
\item in case $M= -2t$ the area of the supersphere is zero.
\end{itemize}
Of course, this is not a problem as long as we do not try to interpret things in a set- or measure-theoretic way.
\end{remark}

\vspace{2mm}
\begin{remark}
One has to be very careful in the interpretation of the formulae that will be derived in the sequel. This has to do with the fact that in general functions no longer commute, which implies that

\[
\int_{SS} fg \neq \int_{SS} gf.
\] 
\end{remark}

\vspace{2mm}

The fact that functions differing a factor $x^2$ have the same integral over the supersphere, as is expected, is expressed in the following:

\begin{lemma}
One has that:
\[
\int_{SS} x^2 f = - \int_{SS} f.
\]
\end{lemma}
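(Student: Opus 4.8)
The plan is to prove the identity $\int_{SS} x^2 f = -\int_{SS} f$ directly from the definition (\ref{defintss}) by comparing the two power series term by term. By linearity of the integral, it suffices to treat the case where $f = R_j$ is a homogeneous polynomial of degree $j$, since an arbitrary superpolynomial decomposes as a finite sum of homogeneous pieces. For such $R_j$, the quantity $x^2 R_j$ is homogeneous of degree $j+2$, so I would expand both $\int_{SS} R_j$ and $\int_{SS} x^2 R_j$ using Definition \ref{definitionintss} and attempt to show the series coincide up to the sign.

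The key computational input is Corollary \ref{laplonpieces}, or more directly the basic formula (\ref{basicequation}) together with Lemma \ref{relationslaplace}(i), which tells us how $\Delta^k$ interacts with multiplication by $x^2$. First I would write out the $k$-th term of $\int_{SS} x^2 R_j$, namely the evaluation at the origin of $\Delta^k(x^2 R_j)$. Using the iteration formula from Lemma \ref{relationslaplace}(i), one can relate $\Delta^k(x^2 R_j)$ to a combination of $x^2 \Delta^k R_j$ and a multiple of $\Delta^{k-1} R_j$. Since we evaluate at the origin, and since $x^2$ vanishes at $0$, only the terms without a residual factor of $x^2$ survive; this is what forces $\Delta^k(x^2 R_j)$ evaluated at $0$ to reduce to a scalar multiple of $(\Delta^{k-1} R_j)(0)$. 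The scalar multiple, coming from the coefficient $2t(2k+M+2t-2)$ pattern in the lemma, will be of the form $4k(k-1+M/2)$ or similar.

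The heart of the matter is then a reindexing argument: I would shift the summation index in the series for $\int_{SS} x^2 R_j$ by one (replacing $k$ by $k+1$) and compare the resulting coefficient, namely $(-1)^{k+1}\frac{2\pi^{M/2}}{2^{2k+2}(k+1)!\,\Gamma(k+1+M/2)}$ times the scalar from $\Delta^{k+1}(x^2 R_j)$, against the coefficient $(-1)^k \frac{2\pi^{M/2}}{2^{2k} k!\,\Gamma(k+M/2)}$ appearing in $-\int_{SS} R_j$. The scalar factor $4(k+1)(k+M/2)$ should exactly cancel the extra $2^2 (k+1)$ in the denominator and convert $\Gamma(k+1+M/2)$ back to $\Gamma(k+M/2)$, leaving only the sign flip $(-1)^{k+1} = -(-1)^k$. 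I expect this $\Gamma$-function bookkeeping to be the main obstacle, though it is a routine manipulation once the recursion from Lemma \ref{relationslaplace}(i) is in hand; care must be taken with the degenerate cases $M \in -2\mN$, where the truncated form of the series applies, but the term-by-term matching of coefficients should go through unchanged since the algebraic identity among the $\Gamma$-factors holds formally.
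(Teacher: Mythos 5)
Your proposal is correct and follows essentially the same route as the paper, whose proof is simply the observation that the identity follows from Definition \ref{definitionintss} together with Lemma \ref{relationslaplace} (in particular part (ii), which gives $\Delta^{t+1}(x^2R_{2t})=4(t+1)(M/2+t)\Delta^t R_{2t}$ directly, rather than rebuilding it from part (i) as you do). Your coefficient bookkeeping, including the cancellation $4(t+1)(t+M/2)/\bigl(2^2(t+1)\Gamma(t+1+M/2)\bigr)=1/\Gamma(t+M/2)$ and the remark that the degenerate cases $M\in-2\mN$ cause no trouble, is accurate.
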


\begin{proof}
This follows immediately from the definition and lemma \ref{relationslaplace}.
\end{proof}

As an immediate consequence of definition \ref{definitionintss} we have the following

\begin{proposition}[Mean value]
Let $f$ be a monogenic or harmonic polynomial. Then
\[
\int_{SS} f = \frac{2 \pi^{M/2}}{\Gamma(M/2)} f(0).
\]
\end{proposition}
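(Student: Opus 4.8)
The plan is to exploit the structure of the integral in Definition \ref{definitionintss}: it is a sum of terms each carrying a factor $(\Delta^k R)(0)$, so that annihilating all positive powers of $\Delta$ collapses the series to its leading term. The whole argument is therefore a direct specialization of the definition, exactly as the preamble (\emph{``As an immediate consequence''}) suggests.

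First I would reduce the two hypotheses to a single one. If $f$ is harmonic, then $\Delta f = 0$ by definition. If $f$ is monogenic, then $\px f = 0$, and since the super-Laplace operator factorizes as $\Delta = \px^2$, we obtain $\Delta f = \px(\px f) = 0$ as well. Hence in both cases $\Delta f = 0$, and consequently $\Delta^{k} f = 0$ for every $k \geq 1$.

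Next I would substitute this into the defining series (\ref{defintss}). Every summand with index $k \geq 1$ contains the factor $(\Delta^{k} f)(0) = 0$ and therefore vanishes, so that only the $k=0$ term survives. Evaluating that term gives
\[
\int_{SS} f = \frac{2 \pi^{M/2}}{2^{0}\, 0!\, \Gamma(M/2)} (\Delta^{0} f)(0) = \frac{2 \pi^{M/2}}{\Gamma(M/2)} f(0),
\]
which is the asserted identity.

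There is essentially no obstacle to overcome here. The only points worth recording are that monogenicity is genuinely stronger than harmonicity, through the factorization $\Delta = \px^2$, so that the single computation above legitimately covers both classes at once; and that no convergence issue intervenes, since $f$ is a polynomial of some finite degree $d$, whence $\Delta^{k} f = 0$ already for $2k > d$ and the series in (\ref{defintss}) is in any case a finite sum.
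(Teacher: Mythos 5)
Your proof is correct and is exactly the argument the paper intends: the proposition is stated there as an immediate consequence of Definition \ref{definitionintss} (with no written proof), and your observation that monogenicity implies harmonicity via $\Delta=\px^2$, so that only the $k=0$ term of the defining series survives, is precisely that immediate consequence.
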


This integral can also be used to prove a kind of orthogonality of super spherical harmonics.

\begin{theorem}[Orthogonality]
Let $H_k, H_l$ be super spherical harmonics of degree $k,l$. If $k \neq l$ then

\[
\int_{SS} H_k H_l = 0 = \int_{SS} H_l H_k.
\]
\label{orth}
\end{theorem}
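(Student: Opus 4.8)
The plan is to reduce the orthogonality statement to the mean-value property together with the interaction formula for $\Delta$ acting on $x^2 H_k$. The key observation is that a product of two spherical harmonics is not itself harmonic, so I cannot simply invoke the mean-value proposition directly. Instead I would first establish a Fischer-type decomposition of the product $H_k H_l$ into pieces of the form $x^{2i}\mathcal{H}_{k+l-2i}$, and then use the fact (from the lemma immediately preceding, $\int_{SS} x^2 f = -\int_{SS} f$) that multiplication by $x^2$ only changes the integral by a sign. This lets me peel off the $x^2$ factors and reduce everything to integrals of genuine spherical harmonics, which vanish against constants by the mean-value property whenever the degree is positive.

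More concretely, here is the order of steps I would carry out. First, I would assume without loss of generality that $k > l$, so that $k+l > 0$, and apply the Fischer-decomposition (assuming $M \not\in -2\mathbb{N}$) to write $H_k H_l = \sum_{i=0}^{\lfloor (k+l)/2 \rfloor} x^{2i} S_{k+l-2i}$ with each $S_{k+l-2i} \in \mathcal{H}_{k+l-2i}$. Second, I would compute $\int_{SS} H_k H_l$ term by term: by the lemma each factor $x^{2i}$ contributes $(-1)^i$, reducing $\int_{SS} x^{2i} S_{k+l-2i}$ to $(-1)^i \int_{SS} S_{k+l-2i}$. Third, by the mean-value proposition, $\int_{SS} S_{k+l-2i}$ is a nonzero multiple of $S_{k+l-2i}(0)$, which vanishes for every harmonic of strictly positive degree; the only potentially surviving term is the one of degree $0$, which occurs only when $k+l$ is even and the decomposition reaches $\mathcal{H}_0$. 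Fourth, I would argue that this degree-zero component must itself vanish: the projection of $H_k H_l$ onto $\mathcal{H}_0$ is, up to a constant, $\Delta^{(k+l)/2}(H_k H_l)$ evaluated suitably, and a direct degree/parity count using $\Delta = \partial_x^2$ together with the null-solution property $\Delta H_k = \Delta H_l = 0$ forces this scalar to be zero when $k \neq l$.

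The main obstacle I anticipate is precisely this last step: showing that the scalar (degree-zero) part of $H_k H_l$ vanishes when $k \neq l$. In the classical commutative setting one uses a clean Green's identity, $\int_{\partial B} (H_k \partial_r H_l - H_l \partial_r H_k)\, d\sigma = 0$ combined with the radial homogeneity $\partial_r H_j = j H_j$ on the sphere, which immediately yields $(k-l)\int H_k H_l = 0$ and hence orthogonality. I would prefer to mimic this Green-type argument directly in superspace, since the Euler operator $\mathbb{E}$ plays exactly the role of $r\partial_r$: applying $\int_{SS}$ to the identity $\Delta(H_k H_l) = (\partial_x H_k)(\partial_x H_l) + (\partial_x H_l)(\partial_x H_k) + H_k \Delta H_l + H_l \Delta H_k$ and exploiting $\Delta H_k = \Delta H_l = 0$ should produce a relation proportional to $(k-l)$ times $\int_{SS} H_k H_l$. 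The delicate point, flagged in the preceding remark that $\int_{SS} fg \neq \int_{SS} gf$, is that the Clifford generators $e_i$ and $\grave{e}_j$ do not commute, so I must be scrupulous about the ordering of factors throughout; in particular the two symmetric statements $\int_{SS} H_k H_l = 0$ and $\int_{SS} H_l H_k = 0$ may require separate (though parallel) verification rather than following from a single commutative computation.
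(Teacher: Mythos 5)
Your reduction in steps 1--3 is sound (modulo the restriction $M \notin -2\mN$ that the Fischer decomposition imposes, which the paper's proof avoids), and you have correctly located the crux: everything hinges on showing that the scalar component of $H_k H_l$, equivalently $\Delta^{(k+l)/2}(H_k H_l)$, vanishes when $k \neq l$. But that is precisely the step you do not prove, and neither of your two suggested routes closes it as stated. A ``degree/parity count'' cannot do it: as far as degree and parity are concerned $H_k H_l$ is a generic homogeneous polynomial of degree $k+l$, and for $k=l$ the scalar part is generically nonzero (e.g.\ $H_1 = x_1$ gives $\Delta(x_1^2) = -2$), so any argument must use harmonicity of \emph{both} factors in an essential, quantitative way. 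The Green-type identity is also problematic: the formula $\Delta(H_kH_l)=(\px H_k)(\px H_l)+(\px H_l)(\px H_k)+\cdots$ is not correct for $\cC$-valued functions of anticommuting variables without a careful even/odd splitting, and in the paper's own development the Green theorems are proved \emph{after} orthogonality and \emph{by means of} it, so you would have to establish such an identity independently --- which amounts to redoing the missing lemma. Moreover $\int_{SS}\Delta(H_kH_l)$ is not directly a multiple of $(k-l)\int_{SS}H_kH_l$; the classical Green identity pairs a sphere integral of normal derivatives with a ball integral of Laplacians, so an extra mechanism (the superball integral, introduced later in the paper) would be needed.

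The missing ingredient, which is the entire content of the paper's proof, is a Leibniz-type lemma: for harmonic $H_k, H_l$ one has $\Delta(H_k H_l) = \sum H_{k-1} H_{l-1}$, a sum of products of spherical harmonics of degrees $k-1$ and $l-1$. The terms of the Leibniz expansion with both derivatives on the same factor are killed by harmonicity, and each surviving cross term puts exactly one derivative on each factor; since the derivatives commute with $\Delta$, those factors remain harmonic. Iterating, $\Delta^{\min(k,l)}(H_k H_l)$ is a sum of terms each containing a constant factor, and one further application of $\Delta$ annihilates everything; $k \neq l$ forces $(k+l)/2 > \min(k,l)$, hence $\Delta^{(k+l)/2}(H_k H_l) = 0$. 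Once you have this, your Fischer/mean-value scaffolding is superfluous: Pizzetti's formula applied to the homogeneous polynomial $H_k H_l$ retains only the single term proportional to $\Delta^{(k+l)/2}(H_k H_l)(0)$, which vanishes --- and this works for every $M$, and symmetrically for $H_l H_k$. You are right that noncommutativity requires the two orderings to be checked separately; the lemma covers both.
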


In order to prove this theorem we need a technical lemma:

\begin{lemma}
Let $H_k, H_l$ be super spherical harmonics of degree $k,l$. Then one has that
\[
\Delta(H_k H_l) = \sum H_{k-1} H_{l-1}
\]
where the right-hand side is a sum of products of spherical harmonics of degree $k-1$ and $l-1$.
\label{laplaceorth}
\end{lemma}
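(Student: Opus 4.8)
The plan is to prove a super version of the classical Leibniz identity for the Laplacian and then observe that, for harmonic factors, only a first-derivative cross term survives. First I would apply the (graded) Leibniz rule twice to each summand of $\Delta = 4\sum_{j=1}^n \partial_{{x \grave{}}_{2j-1}} \partial_{{x \grave{}}_{2j}} - \sum_{j=1}^m \pj^2$ in order to expand $\Delta(fg)$. Assuming $f$ of pure parity $p$ (which costs nothing, since both $\Delta$ and $\mE$ preserve the $\mZ_2$-grading, so the even and odd parts of a spherical harmonic are again spherical harmonics of the same degree), this should yield
\begin{equation*}
\Delta(fg) = (\Delta f)\, g + f\, (\Delta g) + C(f,g),
\end{equation*}
where the cross term
\begin{equation*}
C(f,g) = 4(-1)^p \sum_{j=1}^n \left[ (\partial_{{x \grave{}}_{2j-1}} f)(\partial_{{x \grave{}}_{2j}} g) - (\partial_{{x \grave{}}_{2j}} f)(\partial_{{x \grave{}}_{2j-1}} g) \right] - 2 \sum_{j=1}^m (\pj f)(\pj g)
\end{equation*}
is a finite sum of products of a single first-order derivative of $f$ with a single first-order derivative of $g$. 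Taking $f = H_k$ and $g = H_l$, the first two terms vanish by harmonicity, so $\Delta(H_k H_l) = C(H_k, H_l)$.

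It then remains to identify every factor appearing in $C(H_k, H_l)$ as a spherical harmonic. For the degree, the commutators $[\mE, \pj] = -\pj$ and $[\mE, \partial_{{x \grave{}}_i}] = -\partial_{{x \grave{}}_i}$ show that each first-order derivative lowers the degree of homogeneity by one, so the derivatives of $H_k$ lie in $\cP_{k-1}$ and those of $H_l$ in $\cP_{l-1}$. For harmonicity, I would check that $\Delta$ commutes with each first-order partial derivative: this is immediate for the $\pj$, and for a fermionic derivative $\partial_{{x \grave{}}_l}$ it follows because passing it through each anticommuting pair $\partial_{{x \grave{}}_{2j-1}} \partial_{{x \grave{}}_{2j}}$ in $\Delta$ produces two sign changes (and the diagonal terms vanish by $\partial_{{x \grave{}}_l}^2 = 0$), hence a net $+1$. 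Consequently $\Delta(\partial H_k) = \partial(\Delta H_k) = 0$, so each derivative of $H_k$ is a spherical harmonic in $\cH_{k-1}$, and likewise each derivative of $H_l$ lies in $\cH_{l-1}$. This exhibits $\Delta(H_k H_l) = C(H_k, H_l)$ as a sum of products of spherical harmonics of degree $k-1$ and $l-1$, as claimed.

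I expect the main obstacle to be the careful sign bookkeeping in the graded Leibniz rule for the fermionic part: unlike the commuting variables, the anticommuting derivatives introduce parity-dependent signs, and one must track the parity of $f$ consistently through both applications of Leibniz to land on the stated cross term. Everything else is a routine consequence of the homogeneity and commutation properties already recorded in the excerpt.
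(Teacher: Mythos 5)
Your proposal is correct and follows essentially the same route as the paper: split off the part of $H_k$ of pure fermionic parity (the paper writes $H_k = H_k^+ + H_k^-$ and uses that $\Delta$ is an even operator), expand $\Delta(H_kH_l)$ by the graded Leibniz rule so that harmonicity kills all but the first-order cross terms, and observe that those cross terms are products of elements of $\cH_{k-1}$ and $\cH_{l-1}$. Your sign bookkeeping for the fermionic pairs agrees with the paper's displayed expansion (your $p=0$ case), and your explicit verification that first-order derivatives of a spherical harmonic land in $\cH_{k-1}$ is a detail the paper leaves implicit.
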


\begin{proof}
First note that we can split $H_k = H_k^+ + H_k^-$ where the terms in $H_k^+$ contain only even numbers of anti-commuting co-ordinates and $H_k^-$ only odd numbers. Moreover $\Delta H_k^+ = 0 = \Delta H_k^-$, because $\Delta$ is an even operator.

\noindent
We give the proof for $H_k^+ H_l$, the other part being similar.

\noindent
We have that 

\[
\partial_{x_i}^2 (H_k^+ H_l) = \partial_{x_i}^2 (H_k^+) H_l + 2 \partial_{x_i} (H_k^+) \partial_{x_i}( H_l) +  H_k^+\partial_{x_i}^2( H_l)
\]
and 

\begin{eqnarray*}
\partial_{ {x\grave{}}_{2j-1}} \partial_{ {x\grave{}}_{2j}}(H_k^+ H_l)&=&\partial_{ {x\grave{}}_{2j-1}} \partial_{ {x\grave{}}_{2j}}(H_k^+) H_l - \partial_{ {x\grave{}}_{2j}}(H_k^+)\partial_{ {x\grave{}}_{2j-1}}( H_l) \\
&&+ \partial_{ {x\grave{}}_{2j-1}}(H_k^+)\partial_{ {x\grave{}}_{2j}}( H_l) + H_k^+ \partial_{ {x\grave{}}_{2j-1}} \partial_{ {x\grave{}}_{2j}}( H_l).
\end{eqnarray*}

\noindent
So we obtain:

\begin{eqnarray*}
\Delta (H_k^+ H_l) &=& \Delta (H_k^+) H_l + H_k^+ \Delta(H_l) + \sum H_{k-1} H_{l-1}\\
&=& \sum H_{k-1} H_{l-1}\\
\end{eqnarray*}
due to the harmonicity of $H_k$ and $H_l$.
\end{proof}

We are now able to prove theorem \ref{orth}:
\begin{proof}
The case where $k+l$ is odd is trivial, because formula (\ref{defintss}) gives zero for homogeneous polynomials of odd degree. 
Now let us assume that $k+l$ is even, say $k+l=2p$. Then the integral reduces to

\[
\int_{SS} H_k H_l   =  \frac{2 \pi^{M/2}}{2^{2p} p!\Gamma(p+M/2)} \Delta^p (H_k H_l).
\]

Now we can apply Lemma \ref{laplaceorth}. Suppose e.g. that $k<l$. Then after having applied Lemma \ref{laplaceorth} $k$ times we obtain:

\[
\Delta^k (H_k H_l)= \sum H_{0} H_{l-k}.
\]
One more action of $\Delta$ gives zero, because all the factors $H_0$ are constants. So if $k \neq l$ then 

\[
\int_{SS} H_k H_l = 0  = \int_{SS} H_l H_k.
\]
\end{proof}

\begin{remark}
In the case where $M=-2t$ it is also possible to introduce another type of integration. We proceed as follows. Divide formula (\ref{defintss}) by the area of the supersphere $\sigma^M = \frac{2 \pi^{M/2}}{\Gamma(M/2)}$. Now take the limit of this expression for $
M \rightarrow -2t$. Using the fact that the Gamma-function has simple poles, this limit can be calculated for all terms in the summation with $k \leq t$ (the other terms all having limit infinity and being discarded). This yields

\[
\int_{SS, N} = \lim_{M \rightarrow -2t} \frac{1}{\sigma^M} \int_{SS} =  \sum_{k=0}^{t}  \frac{(t-k)!}{2^{2k} k! t!} (\Delta^{k} R)(0),
\]
where we have left out the terms where $k>t$.
For this new expression the integral of all polynomials of degree $\leq 2 t$ remains while vanishing for polynomials of higher degree. This definition is particularly interesting in the purely fermionic case $m=0$, $n=t$, because the previous definition would lead to the zero-operator. 
\label{otherdef}
\end{remark}

\section{The superball and Green's theorem}

In a similar way as we did for the supersphere we can formally define an integral over the superball. Again this is not a set-theoretically defined object, but an abstraction of the classical ball in $\mR^m$. We are lead to the following:

\vspace{3mm}
\begin{definition}
The integral of a superpolynomial $R$ over the superball is given by

\[
\int_{SB} R =  \sum_{k=0}^{\infty} (-1)^k \frac{\pi^{M/2}}{2^{2k} k!\Gamma(k+M/2+1)} (\Delta^{k} R)(0).
\]
\end{definition}

\vspace{2mm}
We can now easily prove the following generalization of Green's theorem.

\begin{theorem}[Green I]
Let $R$ be a superpolynomial. Then one has

\begin{equation*}
\begin{array}{llcl}
(i)&\int_{SS} x R &=& - \int_{SB} \px R\\
\vspace{-1mm}\\
(ii)&\int_{SS} \Gamma (R) &=& 0\\
\vspace{-1mm}\\
(iii)&\int_{SS} \mE R &=& -\int_{SB} \Delta R.
\end{array}
\end{equation*}
\end{theorem}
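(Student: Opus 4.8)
The plan is to prove all three identities directly by comparing the Pizzetti-type series defining $\int_{SS}$ and $\int_{SB}$, reducing each statement to an algebraic identity between the coefficient sequences and the commutation relations already recorded in the excerpt. Throughout I would use that each integral is a linear functional picking out, term by term, the values $(\Delta^k(\cdot))(0)$, so it suffices to check the identities on homogeneous components $R_d$ and then sum.

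For part (iii) I would start from the super-Euler operator. Since $\int_{SS}\mE R$ only sees $(\Delta^k \mE R)(0)$, and $\mE$ acts as multiplication by $d$ on a homogeneous piece $R_d$, while $\px^2=\Delta$ commutes with $\mE$ up to a known shift (indeed $[\mE,\Delta]=-2\Delta$ follows from the $\mathfrak{sl}_2$ relations, or equivalently from $\Delta^k R_d$ being homogeneous of degree $d-2k$), the quantity $(\Delta^k \mE R_d)(0)$ is nonzero only when $d=2k$, giving $2k(\Delta^k R_{2k})(0)$. On the right, $\int_{SB}\Delta R$ picks out $(\Delta^{k+1}R)(0)$ against the superball coefficients, so matching powers of $\Delta$ one reduces to checking the single scalar identity
\begin{equation*}
\frac{2\pi^{M/2}}{2^{2k}k!\,\Gamma(k+M/2)}\,(2k)
=\frac{\pi^{M/2}}{2^{2(k-1)}(k-1)!\,\Gamma(k+M/2)},
\end{equation*}
which holds after cancelling factorials and powers of $2$. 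Part (ii) is then almost immediate: $\Gamma = x\px-\mE$ and on spherical harmonics the Laplace-Beltrami computation (Theorem~1) already tells us how $x\px$ acts; more directly, $\int_{SS}\Gamma R = \int_{SS}(x\px R) - \int_{SS}\mE R$, and I would combine part (i) applied to $\px R$ with part (iii) to see the two contributions cancel, again reducing to a coefficient identity between the $SS$- and $SB$-series.

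The genuinely new piece is part (i), $\int_{SS}xR = -\int_{SB}\px R$, because it mixes the two different measures and the odd operators $x$ and $\px$. The key structural input is the relation $x\px+\px x = 2\mE+M$ from \eqref{basicrel}, together with $\px^2=\Delta$. On a homogeneous component $R_d$, the left side $\int_{SS}xR_d$ is a sum of terms $(\Delta^k x R_d)(0)$; since $xR_d$ has degree $d+1$, this is nonzero only for $2k=d+1$, i.e. $d$ odd. The plan is to push the leftmost $\Delta=\px^2$ through $x$ using \eqref{basicrel} repeatedly to express $\Delta^k x R_d$ in terms of $\px\,\Delta^{k-1}R_d$ plus lower-order corrections that vanish at the origin for degree reasons; Lemma~\ref{relationslaplace}(iii) is exactly the tool that performs this, yielding something proportional to $\px^{2k-1}R_d=\px\,\Delta^{k-1}R_d$. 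Then $(\px\,\Delta^{k-1}R_d)(0)$ is precisely what $\int_{SB}\px R_d$ measures, and the proof again closes by matching the scalar coefficients of the $SS$-series (with the extra factor from the commutator manipulation) against the $SB$-series, where the presence of $\Gamma(k+M/2)$ versus $\Gamma(k+M/2+1)$ in the two definitions produces exactly the shift that makes the identity balance.

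I expect the main obstacle to be part (i), specifically bookkeeping the action of $\px$ and $x$ on $R$ when $R$ has both an even and an odd anti-commuting component: as in the proof of Lemma~\ref{laplaceorth}, the operator $x$ and the derivatives of anti-commuting variables carry signs, so one must track the grading carefully to be sure the cross terms collapse correctly and that Lemma~\ref{relationslaplace}(iii) applies verbatim on each graded piece. Once the graded version of the commutation $\Delta^k x = $ (const)$\,\px\,\Delta^{k-1}+\cdots$ is in hand, the remaining work is the routine but essential verification that the Gamma-function coefficient ratio between $\int_{SS}$ and $\int_{SB}$ produces the stated sign and normalization; this is pure computation and should not present conceptual difficulty.
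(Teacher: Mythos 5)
Your proposal is correct and follows essentially the same route as the paper: part (i) is handled exactly as in the text, by reducing to an odd homogeneous component $R_{2t+1}$, applying Lemma \ref{relationslaplace}(iii) to obtain $\Delta^{t+1}(xR_{2t+1}) = 2(t+1)\,\Delta^{t}\px R_{2t+1}$, and matching the coefficients of the $SS$- and $SB$-series. The only (harmless) difference is one of ordering: you prove (iii) directly by a coefficient computation and deduce (ii) from (i) and (iii) via $\Gamma = x\px - \mE$, whereas the paper gets (ii) immediately from $[\Gamma,\Delta]=0$ and then derives (iii) from (i) and (ii); both routes are valid and your scalar identity for (iii) checks out.
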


\begin{proof}
For the first expression, we only need to prove the case where $R = R_{2t+1} \in \cP_{2t+1}$. 
Then, using lemma \ref{relationslaplace},

\begin{eqnarray*}
\int_{SS} x R_{2t+1} &=& (-1)^{t+1} \frac{2 \pi^{M/2}}{2^{2t+2} (t+1)!\Gamma(t+M/2+1)} (\Delta^{t+1} x R_{2t+1})(0)\\
&=& (-1)^{t+1} \frac{2 \pi^{M/2} 2 (t+1)}{2^{2t+2} (t+1)!\Gamma(t+M/2+1)} (\Delta^{t} \px R_{2t+1})(0)\\
\end{eqnarray*}
and

\begin{eqnarray*}
\int_{SB} \px R_{2t+1} &=& (-1)^{t} \frac{ \pi^{M/2}}{2^{2t} t!\Gamma(t+M/2+1)} \Delta^{t} \px R_{2t+1}.\\
\end{eqnarray*}

\noindent
The second one is trivial, because $[\Gamma, \Delta] =0$.

\noindent
The third expression is found either by combining the previous statements using $\mE = x \px - \Gamma$, or again by direct calculation.
\end{proof}

As a consequence we immediately have the following
\begin{corollary}
Let $R$ be a superpolynomial. Then
\[
\int_{SS} \Delta_{LB} (R) = 0.
\]
\end{corollary}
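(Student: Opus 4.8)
The corollary asserts that $\int_{SS}\Delta_{LB}(R)=0$ for every superpolynomial $R$, where $\Delta_{LB}=(M-2-\Gamma)\Gamma$.

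The plan is to exploit the factored form $\Delta_{LB}=(M-2-\Gamma)\Gamma$ together with part $(ii)$ of the Green I theorem, which already tells us that $\int_{SS}\Gamma(S)=0$ for any superpolynomial $S$. The key observation is that $\Delta_{LB}(R)=(M-2-\Gamma)\Gamma(R)$ is itself of the form $(M-2)\Gamma(R)-\Gamma(\Gamma(R))$, i.e.\ a $\mC$-linear combination of $\Gamma(R)$ and $\Gamma(S)$ with $S=\Gamma(R)$. Since both summands lie in the image of $\Gamma$ (up to the scalar factor $M-2$), each has vanishing integral over the supersphere by Green I$(ii)$.

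First I would write $\Delta_{LB}(R)=(M-2)\,\Gamma(R)-\Gamma\bigl(\Gamma(R)\bigr)$, using that $M-2$ is a scalar and commutes with everything. Next, using linearity of $\int_{SS}$ established right after Definition \ref{definitionintss}, I would split the integral as
\[
\int_{SS}\Delta_{LB}(R)=(M-2)\int_{SS}\Gamma(R)-\int_{SS}\Gamma\bigl(\Gamma(R)\bigr).
\]
Then I would apply Green I$(ii)$ to each term: the first term vanishes because $\int_{SS}\Gamma(R)=0$, and the second vanishes because $\int_{SS}\Gamma(S)=0$ with $S=\Gamma(R)$ a perfectly valid superpolynomial. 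Hence the whole expression is zero.

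I do not anticipate a serious obstacle here, since the result is essentially an immediate consequence of Green I$(ii)$; the only point requiring mild care is ensuring that the scalar $M-2$ may be pulled through the integral, which is legitimate because $M-2\in\mR$ and $\int_{SS}$ is $\mR$-linear (indeed $\mC$-linear on the left). One should also note that the ordering issues flagged in the earlier remark about noncommutativity do not arise, because no reordering of function factors is performed: we merely apply the linear functional $\int_{SS}$ to a fixed polynomial and use that $\Gamma$ maps $\cP$ into $\cP$.
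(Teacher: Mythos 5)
Your proof is correct and follows essentially the same route as the paper, which states the corollary as an immediate consequence of the Green I theorem: writing $\Delta_{LB}(R)=(M-2)\Gamma(R)-\Gamma(\Gamma(R))$ and applying part $(ii)$ to each term is exactly the intended argument. No gaps.
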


We also have the following theorem, which is classically used to prove the orthogonality of spherical harmonics of different degree.

\begin{theorem}[Green II]
Let $f,g$ be two super-polynomials and let $M \not \in -2 \mN$. Then
\[
\int_{SS} (f \mE g - (\mE f) g) = -  \int_{SB} (f \Delta g - (\Delta f)  g).
\]
\end{theorem}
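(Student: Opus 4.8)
The plan is to reduce the statement to homogeneous components and then to a single identity living entirely on the supersphere. By bilinearity of both sides it suffices to take $f \in \cP_a$ and $g \in \cP_b$ homogeneous. If $a+b$ is odd then every integral occurring is the integral of an odd-degree polynomial, which vanishes both for $\int_{SS}$ by (\ref{defintss}) and for $\int_{SB}$ (its summands are $(\Delta^k R)(0)$ with $\Delta^k R$ homogeneous of odd degree, hence zero at the origin); so both sides are $0$ and we may assume $a+b = 2p$ is even. Since $\mE f = af$ and $\mE g = bg$, the left-hand side equals $(b-a)\int_{SS} fg$. For the right-hand side, $f\Delta g - (\Delta f)g$ is homogeneous of degree $a+b-2$, and comparing Definition \ref{definitionintss} with the definition of $\int_{SB}$ on a homogeneous polynomial $R_{2q}$ gives the elementary ratio $\int_{SB} R_{2q} = \frac{1}{2q+M} \int_{SS} R_{2q}$, where $2q = a+b-2$ and the hypothesis $M \not\in -2\mN$ ensures this denominator (and the relevant Gamma-factors) is nonzero. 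Thus Green II is equivalent to the single supersphere identity
\[
\int_{SS}(f\Delta g - (\Delta f) g) = (a-b)(a+b-2+M)\int_{SS} fg. \qquad (\star)
\]

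To establish $(\star)$ I would expand $f$ and $g$ in their Fischer decompositions, $f = \sum_i x^{2i} F_{a-2i}$ and $g = \sum_j x^{2j} G_{b-2j}$ with $F_{a-2i}, G_{b-2j}$ spherical harmonics. Because $x^2$ is scalar and central, each product term collapses to $x^{2(i+j)} F_{a-2i} G_{b-2j}$, and repeated use of the lemma $\int_{SS} x^2 h = -\int_{SS} h$ rewrites $\int_{SS} fg$ as $\sum (-1)^{i+j} \int_{SS} F_{a-2i} G_{b-2j}$; by the orthogonality theorem \ref{orth} only the terms with matching degree $a-2i = b-2j =: k$ survive, so that $\int_{SS} fg = \sum_k (-1)^{p-k} \int_{SS} F_k G_k$. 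Applying Corollary \ref{laplonpieces} with $i=1$ to each Fischer block gives $\Delta(x^{2j} G_{b-2j}) = (b-k)(b+k-2+M)\, x^{2j-2} G_{b-2j}$ on the surviving layer, and symmetrically for $\Delta f$; feeding these back through the same $x^2$-reduction and orthogonality, the two terms of $(\star)$ combine into
\[
\int_{SS}(f\Delta g - (\Delta f) g) = -\sum_k (-1)^{p-k} [(b-k)(b+k-2+M) - (a-k)(a+k-2+M)] \int_{SS} F_k G_k.
\]

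The heart of the matter is that the bracketed coefficient simplifies to $(b-a)(a+b-2+M)$, which is independent of the summation index $k$; this is precisely the fact that $x^{2i} H_k$ is a $\Delta_{LB}$-eigenfunction with eigenvalue $-k(M-2+k)$ irrespective of $i$ — the natural shadow of $x^2 \equiv -1$ on the supersphere — so that $\Delta_{LB}$ is formally self-adjoint against $\int_{SS}$. Once this $k$-independent factor is pulled out of the sum, the remainder is exactly $\int_{SS} fg$, yielding $(\star)$. I expect this cancellation — the collapse of all the degree-dependent Gamma-quotients produced by Corollary \ref{laplonpieces} into a single index-free factor — to be the one genuinely delicate step; the rest is bookkeeping with the already-proven orthogonality and the rule $x^2 \mapsto -1$.
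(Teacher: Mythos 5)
Your proof is correct, and it is organized differently from the one in the paper. The paper reduces to a single pair of Fischer blocks $f=x^{2k}H$, $g=x^{2l}K$ and then splits into two cases: for harmonics of different degree it shows both sides vanish (the ball integral requiring a separate argument via the Fischer decomposition of the product $HK$ and the projection operator (\ref{projectiononconstants})), and for equal degree it evaluates both Pizzetti-type sums explicitly and matches the Gamma-quotients. You instead observe once and for all that on a homogeneous polynomial of degree $2q$ one has $\int_{SB} R_{2q} = \frac{1}{2q+M}\int_{SS} R_{2q}$ (a one-line comparison of the two defining series, legitimate under $M\notin -2\mN$), which eliminates the superball entirely and collapses the theorem to the single supersphere identity $(\star)$; that identity is then proved by the full Fischer expansion, the rule $\int_{SS} x^2 h = -\int_{SS} h$, orthogonality, and the $k$-independent cancellation $(b-k)(b+k-2+M)-(a-k)(a+k-2+M)=(b-a)(a+b-2+M)$, which I have checked. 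What your route buys is the removal of the case split and of the separate ball-integral computation, together with the conceptual explanation (self-adjointness of $\Delta_{LB}$ against $\int_{SS}$, equivalently the $i$-independence of the $\Delta_{LB}$-eigenvalue on $x^{2i}\cH_k$) of why the cancellation is not an accident; what the paper's route buys is that each of its two cases is an elementary direct evaluation requiring no algebraic identity beyond matching two explicit constants. One small point worth making explicit in your write-up: the passage from $f\Delta g$ to the displayed sum uses that $x^{2j}$ is central in $\cP$ (it is scalar-valued and contains only even products of anticommuting variables), so the blocks really do commute past each other as you assume.
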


\begin{proof}
As $M \not \in -2 \mN$, the Fischer-decomposition exists, so it suffices to consider functions of the following form

\begin{eqnarray*}
f &=& x^{2k} H,\qquad H \in \cH_i\\
g &=& x^{2l} K,\qquad K \in \cH_j.
\end{eqnarray*}

\noindent
We distinguish two cases.

\vspace{2mm}
\noindent
1) The case where $i \neq j$. Then the left-hand side reduces to

\[
\int_{SS} (f \mE g - (\mE f) g ) = (2l+j-2k-i) \int_{SS} x^{2k+2l} H K = 0 
\]
using the orthonality of spherical harmonics of different degree.
So we need to prove that also the right-hand side is zero. We have that

\begin{eqnarray*}
f \Delta g - (\Delta f)  g &=& x^{2k} H 2l(2i+M+2l-2) x^{2l-2} K \\
&&-  2k(2j+M+2k-2)  x^{2k-2} H x^{2l} K\\
&=& \mbox{constant} \, \times  x^{2k + 2l -2} H K.
\end{eqnarray*}

\noindent
We want to calculate

\[
\int_{SB} x^{2k + 2l -2} H K .
\]
Therefore we develop $H K$ in its Fischer-decomposition (suppose $i+j=2t$, the odd case being trivial):

\[
H K = \sum_{p=0}^t x^{2p} \tilde H_{2t-2p}.
\]
All terms in this expansion give zero after integration, except for the term $x^{2t} \tilde H_{0}$.
However, using the projection operator (\ref{projectiononconstants}) we have that

\[
\tilde H_{0} = \frac{1}{4^t t!} \frac{\Gamma(\frac{M}{2})}{\Gamma(t+\frac{M}{2})} \Delta^t (H K)= 0
\] 
due to the theorem on orthogonality of spherical harmonics, which completes the proof.

\vspace{2mm}
\noindent
2) The case where $i = j$. Then the left-hand side becomes:

\begin{eqnarray*}
\int_{SS} (f \mE g - (\mE f) g) &=& (2l-2k) \int_{SS} x^{2k+2l} H K \\
& =& -(2l-2k) \int_{SS} x^{2k+2l-2} H K \\
&=&  -(2l-2k)  \frac{(-1)^{k+l+i-1} 2 \pi^{M/2}}{2^{2(k+l+i-1)} (k+l+i-1)!}\\
&& \times \frac{\Delta^{k+l+i-1}x^{2k+2l-2} H K }{\Gamma(k+l+i-1+M/2)} 
\end{eqnarray*}
and the right-hand side:

\begin{eqnarray*}
\int_{SB} (f \Delta g - (\Delta f)  g) &=& (2l-2k)(2i+M+2l+2k-2)\\
&& \times \int_{SB} x^{2k + 2l -2} H K\\
&=& (2l-2k)(2i+M+2l+2k-2) (-1)^{k+l+i-1} \\
&& \times\frac{ \pi^{M/2}}{2^{2(k+l+i-1)} (k+l+i-1)!}\\
&& \times \frac{\Delta^{k+l+i-1}x^{2k+2l-2} H K}{\Gamma(k+l+i-1+M/2+1)} 
\end{eqnarray*}
and both sides are obviously equal.
\end{proof}

\section{Funk-Hecke theorem in superspace}

Let us now introduce some numerical coefficients, which will be needed later on. We put (suppose for a moment $M \geq 2$)

\[
\alpha_l(t^k) = \sigma^{M-1} \int_{-1}^{1} t^{k} P_{l}^M(t) (1-t^2)^{\frac{M-3}{2}} dt 
\]
with 
\[
P^M_n(t) = \frac{(-1)^n}{2^n (\theta +1)( \theta +2) \ldots (\theta +n)} (1-t^2)^{-\theta} \frac{d^n}{dt^n}(1-t^2)^{\theta+n}
\]
the Legendre polynomial of degree $n$ in $M$ dimensions and $\theta = (M-3)/2$. 

Using partial integration and the definition of the Gamma function we obtain the following explicit expression for $\alpha_l(t^k)$:

\begin{eqnarray*}
\alpha_l(t^k) &=& \frac{k!}{(k-l)!} \frac{2 \pi^{\frac{M-1}{2}}}{2^l} \frac{\Gamma(\frac{k-l+1}{2})}{\Gamma(\frac{M+k+l}{2})}
  \quad \mbox{if $k+l$ even and $k \geq l$}\\
&=&0 \quad \mbox{if $k+l$ odd}\\
&=&0 \quad \mbox{if $k<l$}.\\
\end{eqnarray*}

\noindent
Note that this result is also valid for $M<2$. If $M=-2u$ ($u=0,1,2,\ldots$), we substitute the formula for $k+l$ even by

\begin{eqnarray*}
\alpha_l(t^k) &=& \frac{k!}{(k-l)!} \frac{2 \pi^{\frac{M-1}{2}}}{2^l}\frac{\Gamma(\frac{k-l+1}{2})}{\Gamma(\frac{M+k+l}{2})} \quad \mbox{if $k+l > 2u$.}\\
&=& 0 \quad \mbox{if $k+l \leq 2u$.}
\end{eqnarray*}

\noindent
Finally we also need the following coefficients, where $M= -2u$ and $k+l \leq 2u$:

\begin{eqnarray*}
\alpha_l^*(t^k) &=& \frac{k! (t-\frac{k+l}{2})!}{(k-l)! t!} \frac{ \pi^{-\frac{1}{2}} (-1)^{\frac{k+l}{2}}}{2^l} \Gamma(\frac{k-l+1}{2})
  \quad \mbox{if $k+l$ even, $k \geq l$}\\
&=&0 \quad \mbox{if $k+l$ odd}\\
&=&0 \quad \mbox{if $k<l$}.\\
\end{eqnarray*}

Now we have the following technical lemma

\begin{lemma}
The coefficients $\alpha_l(t^k)$ and $\alpha_l^*(t^k)$ satisfy the following recursion relation:

\begin{eqnarray*}
\alpha_l(t^k)&=& \frac{k}{4s(s+M/2-1)} \left( (k-1)\alpha_l(t^{k-2})  +2l \alpha_{l-1}(t^{k-1}) \right)\\
\alpha_l^*(t^k)&=& \frac{k }{4s(s + M/2-1)} \left( (k-1)\alpha_l^*(t^{k-2})  +2l \alpha_{l-1}^*(t^{k-1}) \right)\\
\end{eqnarray*}
where $2s = k+l$.
\label{recursionrel}
\end{lemma}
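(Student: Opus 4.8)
The plan is to verify both recursions by direct substitution of the closed-form expressions for $\alpha_l(t^k)$ and $\alpha_l^*(t^k)$ into the two sides and reducing the resulting equality to a Gamma-function identity (respectively a factorial identity). Only the regime $k+l$ even with $k\geq l$ is relevant, since outside it every term vanishes and the recursion holds trivially; so throughout I set $2s=k+l$ and record the useful rewritings $\frac{k-l+1}{2}=k-s+\frac12$ and $\frac{M+k+l}{2}=s+M/2$.

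The structural observation that makes the computation go through is that both lower-order coefficients on the right carry the \emph{same} shifted parameter: replacing $\alpha_l(t^k)$ by $\alpha_l(t^{k-2})$ sends $s$ to $s-1$, and replacing it by $\alpha_{l-1}(t^{k-1})$ also sends $s$ to $s-1$. Hence, after pulling out the common prefactor $2\pi^{(M-1)/2}2^{-l}$ and the common denominator $\Gamma(s-1+M/2)$, and after using $\Gamma(z+1)=z\Gamma(z)$ to write $\Gamma(k-s+\frac12)=(k-s-\frac12)\Gamma(k-s-\frac12)$, the combination $(k-1)\alpha_l(t^{k-2})+2l\alpha_{l-1}(t^{k-1})$ collapses to $\Gamma(k-s-\frac12)$ times
\[
\frac{(k-l)(k-l-1)+2l(k-l-1)}{(k-l)!}=\frac{(k-l-1)(k+l)}{(k-l)!}=\frac{2s(k-l-1)}{(k-l)!},
\]
where the one genuinely non-mechanical step is factoring out $(k-l-1)$ and invoking $k-l+2l=2s$.

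I would then fold in the prefactor $\frac{k}{4s(s+M/2-1)}$. The factor $2s$ cancels, and the two collapses $(s+M/2-1)\Gamma(s-1+M/2)=\Gamma(s+M/2)$ and $\frac{k-l-1}{2}\Gamma(k-s-\frac12)=(k-s-\frac12)\Gamma(k-s-\frac12)=\Gamma(k-s+\frac12)$ reconstitute exactly the denominator and numerator of $\alpha_l(t^k)$, while $k\cdot(k-1)!=k!$ supplies the remaining factorial. This reproduces $\alpha_l(t^k)$ and settles the first recursion.

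For the starred coefficients ($M=-2t$, so $s+M/2-1=s-t-1$) the argument is formally identical, with two extra pieces of bookkeeping. First, the sign $(-1)^{(k+l)/2}$ becomes $(-1)^{(k+l)/2-1}=-(-1)^{(k+l)/2}$ in both lower terms, so an overall sign must be tracked. Second, the role played above by the Gamma-function collapse is now played by a factorial telescoping: the lower terms carry $\bigl(t-\tfrac{k+l}{2}+1\bigr)!$, and since $s-t-1=-\bigl(t-\tfrac{k+l}{2}+1\bigr)$ the ratio $\bigl(t-\tfrac{k+l}{2}+1\bigr)!/(s-t-1)$ equals $-\bigl(t-\tfrac{k+l}{2}\bigr)!$; the two minus signs cancel and leave precisely the factor $\bigl(t-\tfrac{k+l}{2}\bigr)!$ occurring in $\alpha_l^*(t^k)$. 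The bracket identity $(k-l-1)(k+l)=2s(k-l-1)$ is reused verbatim. I expect the main obstacle to be purely organizational rather than conceptual: keeping the half-integer Gamma arguments, the uniform shift $s\to s-1$, and (in the starred case) the sign flip and factorial telescoping correctly aligned.
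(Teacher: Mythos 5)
Your proposal is correct and is exactly the paper's approach: the paper's entire proof is the single sentence that the lemma ``follows from a careful comparison between both sides of the formulae,'' i.e.\ direct substitution of the closed-form expressions, which is precisely what you carry out. Your key reductions --- the uniform shift $s\mapsto s-1$ in both lower-order terms, the bracket identity $(k-l)(k-l-1)+2l(k-l-1)=2s(k-l-1)$, the collapse $(s+M/2-1)\Gamma(s-1+M/2)=\Gamma(s+M/2)$ together with $(k-s-\tfrac12)\Gamma(k-s-\tfrac12)=\Gamma(k-s+\tfrac12)$, and the sign and factorial telescoping in the starred case --- all check out.
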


\begin{proof}
This follows from a careful comparison between both sides of the formulae.
\end{proof}

\noindent
Furthermore we note that $\alpha_l$ can be extended by linearity to a functional on the space of polynomial functions $f(t)$ in one variable $t$.

\noindent
Finally we put
\[
<x,y> = \frac{1}{2}(xy+yx)= - \sum_{i=1}^{m} x_i y_i +\frac{1}{2} \sum_{j=1}^{n}({x \grave{}}_{2j-1}{y \grave{}}_{2j} - {x \grave{}}_{2j} {y \grave{}}_{2j-1})
\]
where $x$ and $y$ are two independent vector variables.

\vspace{3mm}
We can now prove the following theorem (see also \cite{MR0499342} for a classical version):

\begin{theorem}[Funk-Hecke]
Let $f(t)$ be a polynomial in one variable. Let $x,y$ be independent vector variables satisying $x^2=y^2=-1$. Let $H_l$ be a spherical harmonic of degree $l$. Then
\[
\int_{SS} f(-<x,y>) H_l(x) = \alpha_l(f) H_l(y) 
\]
with $\alpha_l(f)$ as defined above.
\end{theorem}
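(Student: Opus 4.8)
The plan is to reduce to monomials by linearity and then run an induction on the degree $k$, matching the integral term-by-term to the recursion for $\alpha_l$ supplied by Lemma \ref{recursionrel}. Since both $\int_{SS}$ and $\alpha_l$ are linear in $f$, it suffices to treat $f(t)=t^k$, i.e.\ to prove $\int_{SS}(-<x,y>)^k H_l(x)=\alpha_l(t^k)H_l(y)$, where the integration is in the variable $x$ and the answer is read as a polynomial in $y$. Writing $u=-<x,y>$, the integrand $u^kH_l(x)$ is homogeneous of degree $k+l$ in $x$, so by Definition \ref{definitionintss} every term in Pizzetti's sum vanishes at the origin except the one with $p=s:=(k+l)/2$; in particular the integral is zero when $k+l$ is odd, exactly where $\alpha_l(t^k)=0$. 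Thus for $k+l$ even I would set $I_{k,l}(y):=\int_{SS}u^kH_l=(-1)^sC_s\,\Delta^s(u^kH_l)$ with $C_s=\frac{2\pi^{M/2}}{2^{2s}s!\Gamma(s+M/2)}$, the right-hand side being constant in $x$.

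The crux is a single application of $\Delta=\px^2$ (in the $x$ variables) to $u^kH_l$. Using $\px<x,y>=y$ and that $u$ is scalar, one gets $\px u^k=-k\,u^{k-1}y$; since $H_l$ is harmonic, a direct computation then gives
\[
\Delta(u^kH_l)=k(k-1)\,y^2\,u^{k-2}H_l-2k\,u^{k-1}\,(DH_l),
\]
where $D=<\px,y>$ is the scalar first-order operator obtained from the operator anticommutator $\px(y\,\cdot)+y\,\px=2D$, i.e.\ the polarisation of the Euler operator. Two facts about $D$ are needed: because $\Delta$ has constant coefficients it commutes with $D$, so $DH_l\in\cH_{l-1}$; and, by Euler's relation, evaluating the degree-$(l-1)$ polynomial $DH_l$ at $x=y$ returns $(DH_l)(y)=l\,H_l(y)$. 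On the supersphere $y^2=-1$, so the first term above collapses to $-k(k-1)u^{k-2}H_l$.

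With this identity I would induct on $k$, for all $l$ and all $H_l\in\cH_l$ simultaneously. Applying $\Delta^{s-1}$ to the display and noting that both $u^{k-2}H_l$ and $u^{k-1}(DH_l)$ are again homogeneous of degree $2(s-1)$ in $x$, one re-expresses $I_{k,l}$ through $I_{k-2,l}$ and $\int_{SS}u^{k-1}(DH_l)$. The induction hypothesis gives $I_{k-2,l}=\alpha_l(t^{k-2})H_l(y)$ and, applied to the degree-$(l-1)$ harmonic $DH_l$, $\int_{SS}u^{k-1}(DH_l)=\alpha_{l-1}(t^{k-1})(DH_l)(y)=l\,\alpha_{l-1}(t^{k-1})H_l(y)$. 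Tracking the signs ($(-1)^s/(-1)^{s-1}=-1$ and $y^2=-1$) and collecting constants yields $I_{k,l}(y)=\frac{C_s}{C_{s-1}}\,k\big[(k-1)\alpha_l(t^{k-2})+2l\,\alpha_{l-1}(t^{k-1})\big]H_l(y)$, and since $C_s/C_{s-1}=\frac{1}{4s(s+M/2-1)}$ this is precisely $\alpha_l(t^k)H_l(y)$ by Lemma \ref{recursionrel}. The base cases $k=0$ (the mean value property, giving $\sigma^MH_l(0)$, which vanishes for $l\geq1$ and equals $\alpha_0(t^0)H_0$ for $l=0$) and $k=1$ are checked directly, and the cases $k<l$ follow because the same recursion drives $I_{k,l}$ down to a base case carrying a vanishing coefficient.

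I expect the main obstacle to be the cross-term computation, specifically pinning down the fermionic signs so that $D=<\px,y>$ satisfies $[\Delta,D]=0$ and the Euler identity $(DH_l)(y)=l\,H_l(y)$ hold exactly in this mixed orthogonal/symplectic setting. These two facts are what force the coefficient $2l$ (rather than some other multiple of $l$) in front of $\alpha_{l-1}(t^{k-1})$, and hence the exact agreement with Lemma \ref{recursionrel}; everything else is bookkeeping of the Pizzetti constants, which I have checked collapses cleanly to $C_s/C_{s-1}=1/(4s(s+M/2-1))$.
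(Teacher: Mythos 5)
Your argument is, in its main line, exactly the paper's: reduce to $f=t^k$ by linearity, note that only the term $p=s=(k+l)/2$ of Pizzetti's sum survives on a homogeneous integrand, establish the one-step identity $\Delta\left(<x,y>^kH_l\right)=-k(k-1)<x,y>^{k-2}H_l+2k<x,y>^{k-1}DH_l$ with $D$ the polarised Euler operator, and close an induction on $k$ via Lemma \ref{recursionrel}, using that $DH_l$ is again harmonic of degree $l-1$ and that $(DH_l)(y)=l\,H_l(y)$. For generic $M$ this is complete (modulo the sign bookkeeping you flag yourself, which does work out: the paper verifies the $k=1$ base case by writing out $H_1$ explicitly).

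There is, however, a genuine gap for $M=-2u$, $u\in\mN$, a case the theorem is meant to cover (the coefficients $\alpha_l(t^k)$ are given a separate definition there, vanishing for $k+l\leq 2u$). Your induction step recovers $\Delta^{s-1}(u^{k-2}H_l)$ and $\Delta^{s-1}(u^{k-1}DH_l)$ from the integrals $I_{k-2,l}$ and $\int_{SS}u^{k-1}DH_l$ by dividing by $C_{s-1}$; but $C_{s-1}\propto 1/\Gamma(s-1+M/2)=1/\Gamma(s-1-u)$ vanishes for all $s-1\leq u$, so those integrals are identically zero while the iterated Laplacians they are supposed to determine are not. Equivalently, your ratio $C_s/C_{s-1}=1/\bigl(4s(s+M/2-1)\bigr)$ is singular precisely at the first non-trivial level $s=u+1$, so the induction cannot be started from the levels below. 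The paper repairs this by first proving the statement for the normalised integral $\int_{SS,N}$ of remark \ref{otherdef} with the coefficients $\alpha_l^*(t^k)$ --- which amounts to controlling the quantity $\Delta^s\bigl((-<x,y>)^kH_l(x)\bigr)=\frac{2^{2s}s!\,u!}{(u-s)!}\alpha_l^*(t^k)H_l(y)$ itself for $s\leq u$, rather than the vanishing integral --- and then feeding this into the level $k+l=2u+2$ via the bridging relation between $\alpha_l(t^k)$ and the starred coefficients, after which your recursion takes over. You would need to supply this extra step (and the corresponding starred case of Lemma \ref{recursionrel}), or else restrict your proof to $M\notin-2\mN$.
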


\begin{proof}\
We first examine the case where $M \geq 2$.
It suffices to prove the theorem for $f=t^k$, $k \in \mN$. We calculate that

\begin{eqnarray*}
\Delta \left( <x,y>^k H_l \right) &=& - k(k-1) <x,y>^{k-2} H_l \\
&&+ 2k <x,y>^{k-1} \left(\sum_i y_i \partial_{x_i} + \sum_j {y \grave{}}_j \partial_{{x \grave{}}_j} \right) H_l.
\end{eqnarray*}

We now use induction on $k$. We first examine the case $k=1$, the case $k=0$ being trivial. Then using the previous relation our integral becomes (we assume $l+1=2t$, the odd case is trivial)

\begin{eqnarray*}
&&\int_{SS} (-<x,y>) H_l(x)\\
&=& - \frac{2 \pi^{M/2}}{2^{2t} t!\Gamma(t+M/2)} (-1)^t \Delta^t \left( <x,y> H_l \right)\\
&=&  \frac{2 \pi^{M/2}}{2^{2t} t!\Gamma(t+M/2)} (-1)^{t-1} \Delta^{t-1} 2 \left(\sum_i y_i \partial_{x_i} + \sum_j {y \grave{}}_j \partial_{{x \grave{}}_j} \right) H_l\\
&=& \delta_{1l} \frac{2 \pi^{M/2}}{2^{2} \Gamma(1+M/2)} 2 \left(\sum_i y_i \partial_{x_i} + \sum_j {y \grave{}}_j \partial_{{x \grave{}}_j} \right) H_l.\\
\end{eqnarray*}

\noindent
Now $H_l = H_1$ has the following general form

\[
H_1(x) = \sum_i a_i x_i + \sum_j b_j {x \grave{}}_j, \quad a_i, b_j \in \mR
\]
so

\[
\int_{SS} (-<x,y>) H_l(x) = \delta_{1l}\frac{ \pi^{M/2}}{ \Gamma(1+M/2)} H_l(y)\\
\]
which is equal to the right-hand side of the formula to be proven.

So let us now consider the induction step. Suppose the theorem holds for all $i<k$, i.e.

\[
\int_{SS} (-<x,y>)^i H_l(x) = \alpha_l(t^i) H_l(y) , \quad \mbox{for all $l$}
\]
then we prove the theorem for $t^k$. We assume that $k+l = 2s$, otherwise both sides are zero. Now we have

\begin{eqnarray*}
&&\int_{SS} (-<x,y>)^k H_l(x) \\
&=& \frac{2 \pi^{M/2}}{2^{2s} s!\Gamma(s+M/2)} (-1)^{k+s} \Delta^s \left( <x,y>^k H_l \right)\\
&=& \frac{2 \pi^{M/2}}{2^{2s} s!\Gamma(s+M/2)}  (-1)^{s-1} \Delta^{s-1} k(k-1) (-1)^{k-2} <x,y>^{k-2} H_l \\
&+&  \frac{2 \pi^{M/2}}{2^{2s} s!\Gamma(s+M/2)} (-1)^{s-1} 2k (-1)^{k-1}\\
&&\times \Delta^{s-1}  <x,y>^{k-1} \left(\sum_i y_i \partial_{x_i} + \sum_j {y \grave{}}_j \partial_{{x \grave{}}_j} \right) H_l\\
&=& \frac{k(k-1)}{4s (s+M/2-1)} \int_{SS} (-<x,y>)^{k-2} H_l(x)\\
& +& \frac{2k}{4s (s+M/2-1)} \int_{SS} (-<x,y>)^{k-1}\left(\sum_i y_i \partial_{x_i} + \sum_j {y \grave{}}_j \partial_{{x \grave{}}_j} \right) H_l\\
&=&H_l(y) \frac{k}{4s(s+M/2-1)} \left( (k-1) \alpha_l(t^{k-2}) +2l \alpha_{l-1}(t^{k-1}) \right)\\
&=& \alpha_l(t^k) H_l(y)
\end{eqnarray*}
where we have used lemma \ref{recursionrel} and the induction hypothesis.

We now discuss the other cases. If $M=-2u+1$, the proof remains the same, if we use the adapted definiton of $\alpha_l(t^k)$.
The case $M= - 2u$ is slightly more difficult, because the first $u$ terms in $\int_{SS}$ now vanish. So, one first has to prove the theorem for $\int_{SS,N}$ (see remark \ref{otherdef}) using the coefficients $\alpha_l^*(t^k)$ (which is completely similar to the above). This yields the following formula

\[
\Delta^s (-<x,y>)^k H_l(x) =  \frac{2^{2s} s! u !}{ (u-s)!} \alpha_l^*(t^k) H_l(y)
\]
with $k+l = 2s$ and $s \leq u$. We then use this formula as a first step in the induction proof for $\int_{SS}$, because the following holds

\[
\alpha_l(t^k) = \frac{(-1)^u k u! \pi^{M/2}}{2 (u+1)} \left( (k-1) \alpha_l^*(t^{k-2}) + 2l \alpha_{l-1}^*(t^{k-1}) \right)
\]
with $k+l = 2u+2$.
\end{proof}

We immediately have the following 
\begin{corollary}
Let $f(t)$ be a polynomial in one variable. Let $x,y$ satisy $x^2=y^2=-1$. Then

\[
\int_{SS} f(- <x,y>) =  \alpha_0(f(t)).
\]
\end{corollary}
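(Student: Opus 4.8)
\section*{Proof proposal}

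The plan is to obtain this corollary as the trivial special case of the Funk-Hecke theorem in which the spherical harmonic is taken to be the constant. Specifically, I would apply the Funk-Hecke theorem with $l=0$ and $H_0(x)=1$.

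First I would verify that the constant function $1$ is a legitimate spherical harmonic of degree $0$, i.e.\ that $1 \in \cH_0$. This is immediate from the definition of super-spherical harmonics: one has $\Delta 1 = 0$ (the super-Laplace operator annihilates constants, since it is a second-order differential operator), and $\mE 1 = 0 = 0 \cdot 1$, so $1$ is homogeneous of degree $0$. Hence $H_0(x)=1$ is a valid choice in the Funk-Hecke theorem.

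With this choice, the Funk-Hecke theorem yields
\[
\int_{SS} f(-<x,y>) \cdot 1 = \alpha_0(f) \cdot 1,
\]
since both $H_0(x)$ and $H_0(y)$ equal $1$. The left-hand side is exactly $\int_{SS} f(-<x,y>)$ and the right-hand side is exactly $\alpha_0(f)$, which gives the claimed identity. No separate treatment of the various ranges of $M$ is needed here, because the Funk-Hecke theorem has already been established in all the relevant cases.

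There is no real obstacle in this argument; the entire content has already been carried by the Funk-Hecke theorem. The only point deserving a word of care is the non-commutativity of functions under $\int_{SS}$ flagged in the earlier remark, but this causes no difficulty because $H_0=1$ is scalar and central, so there is no ordering ambiguity in the product $f(-<x,y>)\cdot 1$.
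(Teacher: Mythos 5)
Your proposal is correct and matches the paper's intent exactly: the corollary is stated there as an immediate consequence of the Funk--Hecke theorem, obtained precisely by taking $l=0$ and $H_0(x)=1$ as you do. Your additional checks that $1\in\cH_0$ and that no ordering issue arises are sound but routine.
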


\begin{remark}
Classically this result is easily obtained by realizing that a function depending on the inner product of $x$ and $y$ is constant on each hyperplane perpendicular to $y$.
\end{remark}

\vspace{3mm}
Using the previous theorem, we are able to prove the following:

\begin{corollary}[Reproducing kernel]
Let $M>1$. Then 

\[
F(x,y) = \frac{N(M,k)}{\sigma^M} P^M_k (-<x,y>)
\]
is a reproducing kernel for the space $\cH_k$, i.e.

\[
\int_{SS} F(x,y) H_l(x) = \delta_{kl} H_l(y), \quad \mbox{for all } H_l \in \cH_l,
\]
where
\[
N(M,k)= \frac{2k+M-2}{k} \left( \begin{array}{c} k+M-3\\k-1 \end{array}\right).
\]
\end{corollary}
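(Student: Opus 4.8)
The plan is to obtain the corollary as a direct application of the Funk-Hecke theorem. Since $P^M_k(t)$ is a polynomial of degree $k$ in the single variable $t$, I would take $f(t) = \frac{N(M,k)}{\sigma^M} P^M_k(t)$ and apply the Funk-Hecke theorem, which gives, for every $H_l \in \cH_l$,
\[
\int_{SS} F(x,y) H_l(x) = \frac{N(M,k)}{\sigma^M}\, \alpha_l(P^M_k)\, H_l(y).
\]
The whole problem then reduces to proving that the scalar factor $\frac{N(M,k)}{\sigma^M}\alpha_l(P^M_k)$ equals $\delta_{kl}$.

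For the off-diagonal vanishing I would distinguish two mechanisms. When $l>k$, I expand $P^M_k$ into monomials $t^j$ with $j \leq k < l$; since $\alpha_l(t^j)=0$ whenever $j<l$, linearity of $\alpha_l$ forces $\alpha_l(P^M_k)=0$. When $l<k$, I would instead invoke the integral representation, which for $M>1$ reads $\alpha_l(P^M_k) = \sigma^{M-1}\int_{-1}^{1} P^M_l(t) P^M_k(t)(1-t^2)^{\frac{M-3}{2}}\,dt$, and this vanishes by the classical orthogonality of the Legendre polynomials $P^M_l, P^M_k$ with respect to the weight $(1-t^2)^{\frac{M-3}{2}}$. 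This is precisely where the hypothesis $M>1$ is used, since the weight is integrable on $[-1,1]$ exactly when $(M-3)/2 > -1$.

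For the diagonal case $l=k$ I would exploit that $\alpha_k$ annihilates every monomial of degree $<k$, so that only the top-degree term of $P^M_k$ survives: writing $P^M_k(t) = c_k t^k + (\mbox{lower order})$, one gets $\alpha_k(P^M_k) = c_k\, \alpha_k(t^k)$, where $\alpha_k(t^k) = \frac{2\,k!\,\pi^{M/2}}{2^k \Gamma(k+M/2)}$ is read off from the explicit formula for $\alpha_l(t^k)$ using $\Gamma(1/2)=\sqrt{\pi}$. The leading coefficient $c_k$ I would extract from the Rodrigues-type formula defining $P^M_k$ (equivalently, by identifying $P^M_k$ with the Gegenbauer polynomial of parameter $(M-2)/2$ normalized so that $P^M_k(1)=1$), obtaining $c_k = \frac{2^k\, \Gamma(k+M/2-1)\, \Gamma(M-2)}{\Gamma(M/2-1)\, \Gamma(k+M-2)}$. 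Substituting this together with $\sigma^M = 2\pi^{M/2}/\Gamma(M/2)$ and the stated value of $N(M,k)$ into $\frac{N(M,k)}{\sigma^M} c_k\, \alpha_k(t^k)$, the powers of $2$, the factorials, and the factor $\Gamma(k+M-2)$ cancel, and repeated use of $\Gamma(a+1)=a\Gamma(a)$ collapses the remaining ratio to exactly $1$.

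I expect the only genuine obstacle to be the bookkeeping in this diagonal computation: correctly pinning down the leading coefficient $c_k$ of $P^M_k$ from the Rodrigues formula and then shepherding the product through the Gamma-function identities to $1$. The off-diagonal part is essentially immediate once orthogonality of the Legendre polynomials is invoked, so the real content of the corollary resides entirely in fixing the normalization constant $N(M,k)$.
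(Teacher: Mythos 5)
Your proposal is correct, and its skeleton coincides with the paper's: apply the Funk--Hecke theorem with $f = P^M_k$ to reduce everything to showing $\frac{N(M,k)}{\sigma^M}\alpha_l(P^M_k)=\delta_{kl}$, and then invoke orthogonality of the Legendre polynomials $P^M_l$, $P^M_k$ with respect to the weight $(1-t^2)^{\frac{M-3}{2}}$. The one place you genuinely diverge is the diagonal term: the paper simply quotes the classical normalization
\[
\int_{-1}^{1} P_k^M(t) P_l^M(t) (1-t^2)^{\frac{M-3}{2}}\, dt = \delta_{kl}\, \frac{\sigma^M}{\sigma^{M-1} N(M,k)},
\]
whereas you rederive it from scratch by noting that $\alpha_k$ kills all monomials of degree below $k$, so that $\alpha_k(P^M_k)=c_k\,\alpha_k(t^k)$ with $c_k$ the leading coefficient of $P^M_k$. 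I checked your numbers: $\alpha_k(t^k)=\frac{2\,k!\,\pi^{M/2}}{2^k\Gamma(k+M/2)}$ and $c_k=\frac{2^k\Gamma(k+M/2-1)\Gamma(M-2)}{\Gamma(M/2-1)\Gamma(k+M-2)}$ are both right, and the product $\frac{N(M,k)}{\sigma^M}c_k\alpha_k(t^k)$ does collapse to $1$ (the factors $\Gamma(k+M-2)$, $2^k$ and $\pi^{M/2}$ cancel outright, and the residue is $(2k+M-2)\cdot\frac{1}{M-2}\cdot\frac{M-2}{2}\cdot\frac{2}{2k+M-2}=1$). Your version is more self-contained --- it verifies the constant $N(M,k)$ rather than trusting the cited normalization --- and your degree argument for $l>k$, together with the observation that the weight is integrable precisely when $M>1$, pinpoints where the hypothesis enters more explicitly than the paper does. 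The cost is the extra bookkeeping with the Gegenbauer leading coefficient, which the paper avoids entirely by citing the standard $L^2$-norm of $P^M_k$.
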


\begin{proof}
Using the Funk-Hecke theorem we find that

\begin{eqnarray*}
&&\int_{SS} P^M_k (-<x,y>) H_l(x) \\
&=& H_l(y) \sigma^{M-1} \int_{-1}^{1} P_k^M(t) P_l^M(t) (1-t^2)^{\frac{M-3}{2}} dt.
\end{eqnarray*}
Now the orthogonality-relation of the Legendre-polynomials yields

\[
\int_{-1}^{1} P_k^M(t) P_l^M(t) (1-t^2)^{\frac{M-3}{2}} dt = \delta_{kl} \frac{\sigma^M}{\sigma^{M-1} N(M,k)}
\]
so

\[
\int_{SS} P^M_k (<x,y>) H_l(x) = \delta_{kl} H_l(y) \frac{\sigma^M}{N(M,k)}
\]
which completes the proof.
\end{proof}

The real significance of the Funk-Hecke theorem is that spaces of spherical harmonics are eigenspaces of zonal integral transformations, i.e. transformations whose kernel depends on the inner product of two generalized points on the supersphere. We consider two applications of this idea. 

\vspace{3mm}
\noindent
\textbf{Application 1: the super spherical Fourier transform.}

Consider the kernel $e^{i a t}$, where for the moment $a \in \mR$. Let us first calculate the coefficients $\alpha_l(e^{i a t})$.  We have 

\[
e^{i a t} = \sum_{k=0}^{\infty} \frac{(i a t)^k}{k!}
\]
We only need to consider the case $k \geq l$ and $k+l$ even, since the other terms are zero. So, putting $k = l+2s$, we calculate:

\begin{eqnarray*}
\alpha_l(e^{i a t}) &=& \sum_{s=0}^{\infty} \alpha_l(\frac{(i a t)^{l+2s}}{(l+2s)!})\\
&=& \sum_{s=0}^{\infty} \frac{i^{l+2s}}{(l+2s)!} \frac{(l+2s)!}{(2s)!} \frac{2 \pi^{\frac{M-1}{2}}}{2^l} \frac{\Gamma(s+1/2)}{\Gamma(M/2+l+s)} a^{l+2s}\\
&=& \sum_{s=0}^{\infty} \frac{i^{l}}{2^l}2 \pi^{\frac{M}{2}} \frac{1}{s! 2^{2s}\Gamma(M/2+l+s)} a^{l+2s}\\
&=&i^l (2\pi)^{M/2}  a^{1-M/2} J_{\frac{M}{2}+l-1}(a).
\end{eqnarray*}

\noindent
Now let us introduce the following operator ($a=1$):

\[
\cF_{SS}( . )(y) = \int_{SS} \exp(-i <x,y>)
\]
which we will call the super spherical Fourier transform. Then we obtain the following

\begin{theorem}
Let $H_l \in \cH_l$. Then

\[
\cF_{SS}( H_l(x))(y) = i^l (2\pi)^{M/2} J_{\frac{M}{2}+l-1}(1) H_l(y).
\]
\end{theorem}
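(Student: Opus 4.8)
The plan is to recognize the super spherical Fourier transform as a direct special case of the Funk-Hecke theorem. First I would observe that the kernel is exactly of the zonal form required: $\exp(-i<x,y>) = f(-<x,y>)$ for the choice $f(t) = e^{it}$, i.e.\ the function $e^{iat}$ with parameter $a=1$. Under the standing hypotheses $x^2 = y^2 = -1$ of the Funk-Hecke theorem, this gives
\[
\cF_{SS}(H_l(x))(y) = \int_{SS} \exp(-i<x,y>) H_l(x) = \int_{SS} f(-<x,y>) H_l(x),
\]
so that the Funk-Hecke theorem immediately yields $\cF_{SS}(H_l(x))(y) = \alpha_l(f) H_l(y) = \alpha_l(e^{it}) H_l(y)$.

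Next I would invoke the computation of $\alpha_l(e^{iat})$ carried out just before the statement, namely
\[
\alpha_l(e^{iat}) = i^l (2\pi)^{M/2} a^{1-M/2} J_{\frac{M}{2}+l-1}(a),
\]
and specialize it to $a=1$, where the factor $a^{1-M/2}$ collapses to $1$. This gives $\alpha_l(e^{it}) = i^l (2\pi)^{M/2} J_{\frac{M}{2}+l-1}(1)$, and substituting into the previous identity produces exactly the claimed formula.

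The only genuine point requiring care — and the step I would treat as the main obstacle — is the passage from a polynomial $f$, for which the Funk-Hecke theorem is literally stated, to the entire function $e^{it}$. I would justify this by noting that $\int_{SS}$, although written as an infinite Pizzetti sum, reduces to a single term on each homogeneous component: the integrand $<x,y>^k H_l(x)$ is homogeneous of degree $k+l$ in $x$, so only $\Delta^{(k+l)/2}$ evaluated at the origin survives. One may therefore integrate the power series $e^{-i<x,y>} = \sum_k (-i<x,y>)^k/k!$ term by term, apply the Funk-Hecke theorem to each monomial $t^k$, and use the linear extension of $\alpha_l$ to polynomial functions. The resulting series is precisely the one already summed to obtain $\alpha_l(e^{iat})$, so no convergence argument beyond that computation is needed, and the proof concludes.
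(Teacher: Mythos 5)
Your proposal is correct and follows essentially the same route as the paper, which derives the theorem as an immediate consequence of the Funk--Hecke theorem applied to the kernel $e^{iat}$ at $a=1$ together with the preceding computation of $\alpha_l(e^{iat})$. Your additional remark justifying the term-by-term passage from monomials $t^k$ to the entire function $e^{it}$ (via homogeneity reducing the Pizzetti sum to a single term) is a point the paper leaves implicit, but it does not change the argument.
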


\vspace{3mm}
\noindent
\textbf{Application 2: the super spherical Radon transform.}

We consider the kernel $\delta(t)$. Expansion in plane waves yields:

\[
2 \pi \delta(t) = \lim_{\nu \rightarrow \infty} \int_{-\nu}^{\nu} e^{i \xi t} d \xi.
\]
We calculate the coefficients $\alpha_l(\delta(t))$. We find, using the results on the Fourier kernel:

\begin{eqnarray*}
\alpha_l(\delta(t)) &=& \frac{1}{2 \pi} \lim_{\nu \rightarrow \infty} \int_{-\nu}^{\nu} \alpha_l(e^{i \xi t}) d \xi\\
&=&i^l (2\pi)^{M/2-1} \lim_{\nu \rightarrow \infty} \int_{-\nu}^{\nu} \xi^{1-\frac{M}{2}} J_{\frac{M}{2}+l-1}(\xi) d \xi.
\end{eqnarray*}
The remaining integral is zero if $l$ is odd. If $l$ is even, this integral can be calculated explicitly using e.g. \cite{MR0499342}, p. 245 (case $M>2$). This yields

\[
\alpha_l(\delta(t)) = 2 (-1)^{l/2} \pi^{M/2-1}\frac{\Gamma(\frac{l+1}{2})}{\Gamma(\frac{M+l-1}{2})},
\]
again completely in correspondance with the classical result (see e.g. the book \cite{MR1412143}). If we introduce the following operator:

\[
\cR_{SS}(.)(y) = \int_{SS} \delta(- <x,y>)
\]
which we will call the super spherical Radon transform, we can summarize the previous results in the following

\begin{theorem}
Let $H_l \in \cH_l$. Then if $M>2$

\begin{eqnarray*}
\cR_{SS}( H_l(x))(y) &=& 2 (-1)^{l/2} \pi^{M/2-1}\frac{\Gamma(\frac{l+1}{2})}{\Gamma(\frac{M+l-1}{2})} H_l(y), \quad \mbox{$l$ even}\\
&=& 0, \quad \mbox{$l$ odd.}
\end{eqnarray*}
\end{theorem}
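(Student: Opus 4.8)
The plan is to read off the statement directly from the Funk-Hecke theorem with the (distributional) kernel $f(t)=\delta(t)$. By the very definition $\cR_{SS}(H_l(x))(y)=\int_{SS}\delta(-<x,y>)H_l(x)$, so that a single application of Funk-Hecke would give $\cR_{SS}(H_l(x))(y)=\alpha_l(\delta(t))\,H_l(y)$; the whole content of the theorem is then the value of the scalar $\alpha_l(\delta(t))$ established just above. The subtlety I must first dispose of is that Funk-Hecke was proven only for polynomial $f$, whereas $\delta$ is not a polynomial. I would bridge this gap with the plane-wave representation $2\pi\,\delta(t)=\lim_{\nu\to\infty}\int_{-\nu}^{\nu}e^{i\xi t}\,d\xi$, exploiting that $\alpha_l$ is a linear functional and that $\int_{SS}$ acts term-by-term through powers of the Laplacian.

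Granting this, I would compute $\alpha_l(\delta(t))=\frac{1}{2\pi}\lim_{\nu\to\infty}\int_{-\nu}^{\nu}\alpha_l(e^{i\xi t})\,d\xi$, insert the value $\alpha_l(e^{i\xi t})=i^l(2\pi)^{M/2}\xi^{1-M/2}J_{M/2+l-1}(\xi)$ found in Application 1, and thereby reduce everything to the single improper integral $\int_{-\infty}^{\infty}\xi^{1-M/2}J_{M/2+l-1}(\xi)\,d\xi$. The parity of this integrand decides the two cases: expanding the Bessel function shows $\xi^{1-M/2}J_{M/2+l-1}(\xi)$ is a power series in $\xi$ containing only exponents of the same parity as $l$, so for $l$ odd it is an odd function and the symmetric integral vanishes, giving $\cR_{SS}(H_l)=0$.

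For $l$ even the integrand is even, so I would evaluate $2\int_{0}^{\infty}\xi^{1-M/2}J_{M/2+l-1}(\xi)\,d\xi$ by the standard Mellin-type formula for $\int_0^\infty\xi^{\mu-1}J_\nu(\xi)\,d\xi$ (the cited reference, p.~245), with $\mu=2-M/2$ and $\nu=M/2+l-1$. This produces $\Gamma(\frac{l+1}{2})/\Gamma(\frac{M+l-1}{2})$ up to powers of $2$; reassembling the constants $i^l(2\pi)^{M/2-1}$ and the factor $2$ yields exactly $2(-1)^{l/2}\pi^{M/2-1}\Gamma(\frac{l+1}{2})/\Gamma(\frac{M+l-1}{2})$, as claimed.

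The hard part is analytic rather than algebraic. I expect the main obstacle to be justifying that the Funk-Hecke identity, established only for polynomials, persists under the limit that defines $\delta$, and that $\alpha_l$ may be interchanged with both the $\xi$-integration and the limit $\nu\to\infty$. The convergence of the resulting oscillatory Bessel integral is precisely where the hypothesis $M>2$ is forced and where the cited tabulated formula applies; for smaller (and for negative even) $M$ the improper integral no longer converges in this form, which is exactly why the companion regularized functional $\int_{SS,N}$ and the coefficients $\alpha_l^*$ were introduced earlier.
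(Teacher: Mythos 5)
Your proposal follows essentially the same route as the paper: apply the Funk--Hecke theorem with kernel $\delta(t)$, represent $\delta$ by the plane-wave limit $2\pi\delta(t)=\lim_{\nu\to\infty}\int_{-\nu}^{\nu}e^{i\xi t}\,d\xi$, reuse the value of $\alpha_l(e^{i\xi t})$ from the Fourier application, kill the odd-$l$ case by parity, and evaluate the even-$l$ case with the tabulated Bessel integral from the cited reference. Your explicit parity argument and your remarks on interchanging $\alpha_l$ with the limit are in fact slightly more careful than the paper's own treatment, but the substance is identical.
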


\section{Integration over superspace and connection with the Berezin integral}

We can combine the previous section with the idea of integration in spherical co-ordinates in Euclidean space in order to obtain a possible definition of an integral in superspace. Indeed, the integral of a function $f$ over $\mR^m$ can be expressed as follows using spherical co-ordinates:

\[
\int_{\mR^m} f(x) dx = \int_0^{+\infty}  r^{m-1} dr \int_{\mS^{m-1}} f(r \xi) d \xi, \quad \xi \in \mS^{m-1}.
\]

It is possible to extend this recipe to superspace by substituting $m \leftrightarrow M$. If we consider a function of the following form:

\[
R_k \exp(x^2), \quad R_k \in \cP_k
\]
we then obtain the following \textit{definition} for an integral $\int_{\mR^{m |2n}}$ over the whole superspace:

\begin{eqnarray*}
\int_{\mR^{m|2n}} f &=& \int_0^{+\infty} r^{k+M-1} e^{-r^2}  dr \int_{SS} R_k\\
&=& \frac{1}{2} \Gamma(\frac{k+M}{2}) \int_{SS} R_k
\end{eqnarray*}
where the second expression is used if the integral over $r$ is divergent. Using the definition of the integral over the supersphere (see formula (\ref{defintss})) we immediately arrive at the following

\vspace{3mm}
\begin{theorem}
The integral of a function $f = R \, \exp(x^2)$ with $R$ an arbitrary super-polynomial is given by the following formula:
\begin{equation}
\int_{\mR^{m|2n}} f = \sum_{k=0}^{\infty} (-1)^k \frac{\pi^{M/2}}{2^{2k} k!} (\Delta^{k} R )(0) =\pi^{M/2}(\exp(-\Delta/4))R(0).
\label{superintegral}
\end{equation}
\end{theorem}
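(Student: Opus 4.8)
The plan is to reduce the statement to a single homogeneous input by linearity and then collapse the Pizzetti-type series (\ref{defintss}) using a degree argument. First I would decompose $R = \sum_{k=0}^{\infty} R_k$ with $R_k \in \cP_k$ and use linearity of both $\int_{\mR^{m|2n}}$ and $\int_{SS}$, so that it suffices to verify the identity on each homogeneous piece $R_k \exp(x^2)$. For such an input the definition of the superspace integral reads
\[
\int_{\mR^{m|2n}} R_k \exp(x^2) = \frac{1}{2}\Gamma\Big(\frac{k+M}{2}\Big) \int_{SS} R_k ,
\]
where the right-hand side is understood via the Gamma-function value whenever the radial integral diverges.

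Next I would substitute the definition (\ref{defintss}) of $\int_{SS} R_k$. The decisive observation is that $\Delta^j R_k$ is homogeneous of degree $k-2j$, so its value at the origin vanishes unless $k-2j=0$. Hence the infinite Pizzetti sum collapses to a single surviving term, which exists only for even $k$: writing $k=2p$,
\[
\int_{SS} R_{2p} = (-1)^p \frac{2\pi^{M/2}}{2^{2p} p!\,\Gamma(p+M/2)} (\Delta^p R_{2p})(0),
\]
while $\int_{SS} R_k = 0$ for odd $k$. Inserting this and noting $\Gamma(\frac{2p+M}{2}) = \Gamma(p+M/2)$, the two Gamma factors cancel and each even component contributes exactly $(-1)^p \frac{\pi^{M/2}}{2^{2p} p!}(\Delta^p R_{2p})(0)$.

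To reassemble, I would use that $(\Delta^p R)(0) = (\Delta^p R_{2p})(0)$: applying $\Delta^p$ to any other homogeneous part of $R$ yields either zero or a polynomial homogeneous of positive degree, neither of which contributes at the origin. This lets me replace $R_{2p}$ by the full $R$ inside every term and re-index $p \mapsto k$, giving the first equality $\sum_{k=0}^{\infty}(-1)^k \frac{\pi^{M/2}}{2^{2k} k!}(\Delta^k R)(0)$. The second equality is then immediate: since $(-1)^k/2^{2k} = (-1/4)^k$, the series is precisely $\pi^{M/2}\sum_{k=0}^{\infty} \frac{(-\Delta/4)^k}{k!} R(0) = \pi^{M/2}(\exp(-\Delta/4)R)(0)$.

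The computation is essentially routine, and the only delicate point I anticipate is the cancellation of Gamma factors when $M \in -2\mN$, where $\Gamma(p+M/2)$ may sit at a pole. There the radial factor $\Gamma(\frac{k+M}{2})$ (used exactly in the regime where the radial integral diverges and is replaced by its Gamma value) meets the reciprocal $1/\Gamma(p+M/2)$ coming from (\ref{defintss}); the ratio is unambiguously $1$, but strictly it should be read as a limit in $M$, consistent with the conventions already adopted in Definition \ref{definitionintss} and Remark \ref{otherdef}. Apart from this bookkeeping I expect no genuine obstacle.
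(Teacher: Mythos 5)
Your proposal is correct and follows essentially the same route as the paper, which derives the theorem "immediately" by substituting the Pizzetti-type definition (\ref{defintss}) into the radial formula $\int_{\mR^{m|2n}} R_k\exp(x^2)=\frac{1}{2}\Gamma(\frac{k+M}{2})\int_{SS}R_k$ and cancelling the Gamma factors; your additional remarks on the collapse of the Pizzetti sum to the single degree-matching term and on the limiting interpretation when $M\in-2\mN$ simply make explicit the bookkeeping the paper leaves implicit.
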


\vspace{3mm}
As all reference to the super-dimension $M$ has disappeared in formula (\ref{superintegral}) (except in the scaling of the formula), it seems interesting to compare this definition with the Berezin integral. This integral is defined as follows (see e.g. \cite{MR565567}). Let $f$ be an element of $C^{\infty}(\mR^m) \otimes \Lambda^{2n}$, i.e. $f$ is a superfunction with the following expansion:

\begin{equation*}
f(x, {x \grave{}}) = \sum_{\nu=(\nu_1,\ldots,\nu_{2n})} f_{\nu}(x) {x \grave{}}_{1}^{\nu_1} \ldots {x \grave{}}_{2n}^{\nu_{2n}}
\label{expansion}
\end{equation*}
where $\nu_i = 0$ or $1$ and $f_{\nu}(x)$ is a smooth function of the (real) co-ordinates $(x_1,\ldots,x_m)$. Then by definition
\[
\int_{B} f = \int_{\mR^m} f_{(1,\ldots,1)}(x) dx.
\]
In other words, we have that $\int_B =  \int_{\mR^m} dx \partial_{{x \grave{}}_{2n}} \ldots \partial_{{x \grave{}}_{1}}$.

\vspace{3mm}
Now we have the following:

\begin{theorem}
For functions $f$ of the form $R \, \exp(x^2)$ with $R$ a polynomial, the Berezin integral is equivalent with the integral defined in (\ref{superintegral}), i.e.

\begin{equation}
\int_{\mR^{m|2n}} f = \pi ^{-n} \int_{B} f.
\label{equivBerezinDBS}
\end{equation}
\label{theoremberezin} 
\end{theorem}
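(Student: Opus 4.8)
The plan is to exploit the fact that both sides of (\ref{equivBerezinDBS}) split cleanly into a bosonic and a fermionic factor, reducing the statement to a classical Gaussian identity in the commuting variables together with an elementary finite computation in the anticommuting variables. First I would note that both functionals $\int_{\mR^{m|2n}}$ and $\int_B$ are linear in $R$, and that the Clifford generators $e_i,{e \grave{}}_j$ commute with all variables and with all the derivatives involved; hence they may be treated as constants and it suffices to prove the identity for $R=p(x)\,q({x \grave{}})$, where $p$ is a polynomial in the commuting variables $x_1,\dots,x_m$ and $q$ a monomial in the anticommuting variables ${x \grave{}}_1,\dots,{x \grave{}}_{2n}$.

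Next I would use the product structure of the Gaussian. Since $x^2=\sum_{j=1}^n {x \grave{}}_{2j-1}{x \grave{}}_{2j}-\sum_{j=1}^m x_j^2$ and the fermionic summand is even, one has $\exp(x^2)=\exp(-\sum_j x_j^2)\exp(\sum_j {x \grave{}}_{2j-1}{x \grave{}}_{2j})$, and similarly the operator in (\ref{superintegral}) factors as $\exp(-\Delta/4)=\exp(\frac{1}{4}\sum_j\partial_{x_j}^2)\exp(-\sum_j\partial_{{x \grave{}}_{2j-1}}\partial_{{x \grave{}}_{2j}})$, because the bosonic and fermionic second-order operators commute. Feeding $R=p\,q$ into (\ref{superintegral}) then factorises the left-hand side as $\pi^{M/2}$ times a bosonic factor $\bigl[\exp(\frac{1}{4}\sum_j\partial_{x_j}^2)p\bigr](0)$ and a fermionic factor equal to the scalar (body) part of $\exp(-\sum_j\partial_{{x \grave{}}_{2j-1}}\partial_{{x \grave{}}_{2j}})q$. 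On the Berezin side, the purely bosonic piece $p\exp(-\sum_j x_j^2)$ pulls out of the anticommuting derivatives, so $\int_B R\exp(x^2)$ factorises as $\bigl(\int_{\mR^m}p\,e^{-\sum_j x_j^2}\,dx\bigr)$ times the coefficient of ${x \grave{}}_1\cdots{x \grave{}}_{2n}$ in $q\exp(\sum_j {x \grave{}}_{2j-1}{x \grave{}}_{2j})$.

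Writing $\pi^{M/2}=\pi^{m/2}\pi^{-n}$, I would then match the two factorisations against two separate identities. The bosonic one, $\pi^{m/2}\bigl[\exp(\frac{1}{4}\sum_j\partial_{x_j}^2)p\bigr](0)=\int_{\mR^m}p\,e^{-\sum_j x_j^2}\,dx$, is the classical Wick identity, checked at once on monomials via $\int_{\mR}x^{2k}e^{-x^2}\,dx=(2k)!\,\sqrt{\pi}/(4^k k!)$, and it carries exactly the factor $\pi^{m/2}$; after it is used, the remaining $\pi^{-n}$ matches on both sides and the whole statement collapses to the purely fermionic identity that the body of $\exp(-\sum_j\partial_{{x \grave{}}_{2j-1}}\partial_{{x \grave{}}_{2j}})q$ equals the top coefficient of $q\exp(\sum_j {x \grave{}}_{2j-1}{x \grave{}}_{2j})$. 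Since both the operator $\exp(-\sum_j\partial_{{x \grave{}}_{2j-1}}\partial_{{x \grave{}}_{2j}})=\prod_j\exp(-\partial_{{x \grave{}}_{2j-1}}\partial_{{x \grave{}}_{2j}})$ and the exponential $\prod_j(1+{x \grave{}}_{2j-1}{x \grave{}}_{2j})$ factor over the $n$ symplectic pairs, this identity reduces to a single pair, where it is verified directly on the four monomials $1,\,{x \grave{}}_{2j-1},\,{x \grave{}}_{2j},\,{x \grave{}}_{2j-1}{x \grave{}}_{2j}$.

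I expect the main obstacle to be the bookkeeping of the fermionic signs: fixing the conventions for the anticommuting derivatives so that, for instance, $\partial_{{x \grave{}}_{2j-1}}\partial_{{x \grave{}}_{2j}}({x \grave{}}_{2j-1}{x \grave{}}_{2j})$ takes the value dictated by $\Delta x^2 = 2M$, keeping track of the sign produced by the ordering $\partial_{{x \grave{}}_{2n}}\cdots\partial_{{x \grave{}}_1}$ in the Berezin integral, and justifying that the top-coefficient extraction and the evaluation at the origin both factor through the individual symplectic pairs without spurious signs. Once the single-pair computation is pinned down, the product over pairs together with the classical bosonic identity assembles immediately into (\ref{equivBerezinDBS}); I note also that no restriction on $M$ is required, since both (\ref{superintegral}) and $\int_B$ are defined for all $m,n$ (including the purely fermionic case $m=0$, where the bosonic factor is trivial).
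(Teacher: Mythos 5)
Your proposal is correct, and it reaches the result by a genuinely different route than the paper. The paper argues by brute force on a fixed monomial basis: it reduces to $R_{2k}=x_1^{2\alpha_1}\cdots x_m^{2\alpha_m}({x\grave{}}_1{x\grave{}}_2)^{\beta_1}\cdots({x\grave{}}_{2n-1}{x\grave{}}_{2n})^{\beta_n}$, evaluates $\Delta^k(R_{2k})$ by counting the single surviving term in the multinomial expansion of $\Delta^k$ (it occurs $k!/(\alpha_1!\cdots\alpha_m!)$ times, with the factor $(-1)^{k-l}2^{2l}$), and then matches the resulting explicit constant against a direct extraction of the top fermionic coefficient of $R_{2k}\exp(x^2)$ followed by the Gaussian moment integral. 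You instead split the operator $\exp(-\Delta/4)$ into its commuting bosonic and fermionic halves and reduce the statement to two independent identities: the classical Wick/heat-semigroup identity $\pi^{m/2}\bigl[\exp(\tfrac14\sum_j\partial_{x_j}^2)p\bigr](0)=\int_{\mR^m}p\,e^{-\sum_j x_j^2}\,dx$, and a purely fermionic identity that factors over the $n$ symplectic pairs and is checked on four monomials. This is cleaner in that it makes visible where the splitting $\pi^{M/2}=\pi^{m/2}\pi^{-n}$ comes from and replaces the multinomial bookkeeping by two rank-one computations; the paper's version is more self-contained but less transparent. Two small points you should pin down when writing it up: first, the sign convention must be fixed so that $\partial_{{x\grave{}}_{2j-1}}\partial_{{x\grave{}}_{2j}}({x\grave{}}_{2j-1}{x\grave{}}_{2j})=-1$ (forced by $\Delta(x^2)=2M$), which is exactly what makes the body of $\exp(-\partial_{{x\grave{}}_{2j-1}}\partial_{{x\grave{}}_{2j}})({x\grave{}}_{2j-1}{x\grave{}}_{2j})$ equal to $+1$ and match the top coefficient of ${x\grave{}}_{2j-1}{x\grave{}}_{2j}(1+{x\grave{}}_{2j-1}{x\grave{}}_{2j})$; second, the sign-free factorization over pairs is only automatic when each single-pair factor $q_j$ of the fermionic monomial is even, but this costs nothing since a $q$ containing an unpaired variable annihilates both sides (as your single-pair check on ${x\grave{}}_{2j-1}$ already shows), which is the same implicit reduction the paper makes by restricting to products of complete pairs $({x\grave{}}_{2j-1}{x\grave{}}_{2j})^{\beta_j}$.
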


\begin{proof}
First note that it suffices to give the proof for functions $f$ of the following form

\begin{eqnarray*}
f &=& R_{2k} \exp{x^2}\\
&=& x_{1}^{2 \alpha_1} \ldots x_{m}^{2 \alpha_m} ({x \grave{}}_{1} {x \grave{}}_{2})^{\beta_1} \ldots ({x \grave{}}_{2n-1} {x \grave{}}_{2n})^{\beta_n} \exp(x^2), 
\end{eqnarray*}
where $\alpha_i \in \mN$, $\beta_i \in \{0,1\}$, $\sum \alpha_i + \sum \beta_i = k$ and $\sum \beta_i = l$. We will now calculate the integral of $f$ with the two definitions obtaining the same result. Let us first calculate $\int_{\mR^{m|2n}} f$. We need to calculate $\Delta^k(R_{2k})$. One immediately sees that in $\Delta^k$ only the term 

\[
(-1)^{k-l} 2^{2l} \partial_{x_{1}}^{2 \alpha_1} \ldots \partial_{x_{m}}^{2 \alpha_m} (\partial_{{x \grave{}}_{1}} \partial_{{x \grave{}}_{2}})^{\beta_1} \ldots (\partial_{{x \grave{}}_{2n-1}} \partial_{{x \grave{}}_{2n}})^{\beta_n}
\]
gives a non-zero result. As this term occurs $\frac{k !}{ \alpha_1! \ldots \alpha_m!}$ times, we obtain

\[
\Delta^k(R_{2k}) = \frac{k !}{ \alpha_1! \ldots \alpha_m!} 2^{2l} (2 \alpha_1)! \ldots (2\alpha_m)! (-1)^k.
\]

\noindent
Using this result we find that

\begin{eqnarray*}
\int_{\mR^{m|2n}} f &=&  (-1)^k \frac{\pi^{M/2}}{2^{2k} k!} (\Delta^{k} R_{2k} )(0)\\
&=& \frac{\pi^{M/2}}{2^{2k} k!} \frac{k!}{ \alpha_1! \ldots \alpha_m!} 2^{2l} (2 \alpha_1)! \ldots (2\alpha_m)!\\
&=&\pi^{-n} \frac{\pi^{m/2}}{2^{2k-2l}} \frac{(2 \alpha_1)! \ldots (2\alpha_m)!}{ \alpha_1! \ldots \alpha_m!}\\
&=&\pi^{-n} \int_{\mR^m} x_{1}^{2 \alpha_1} \ldots x_{m}^{2 \alpha_m} \exp(\ux^2) \, dx,
\end{eqnarray*}
where we have put $\ux^2 = -  \sum_{j=1}^m x_j^2$.

On the other hand let us consider the Berezin integral of $f$. We need to determine the term of $f$ in ${x \grave{}}_{1} \ldots {x \grave{}}_{2n}$. This is equivalent with determining the term of $\exp(x^2)$ in $({x \grave{}}_{1} {x \grave{}}_{2})^{1-\beta_1} \ldots ({x \grave{}}_{2n-1} {x \grave{}}_{2n})^{1-\beta_n}$. Now we calculate, putting $\uxb^2 = \sum_{j=1}^n {x\grave{}}_{2j-1} {x\grave{}}_{2j}$

\begin{eqnarray*}
\exp(x^2) &=& \exp(\ux^2 + \uxb^2)\\
&=& \exp(\ux^2)\exp(\uxb^2)\\
&=& \exp(\ux^2) (\sum_{k=0}^n \frac{\uxb^{2k}}{k!})\\
&=&  \exp(\ux^2) (({x \grave{}}_{1} {x \grave{}}_{2})^{1-\beta_1} \ldots ({x \grave{}}_{2n-1} {x \grave{}}_{2n})^{1-\beta_n} + \mbox{other terms}).\\
\end{eqnarray*}

\noindent
So 

\[
R_{2k} \exp(x^2) = x_{1}^{2 \alpha_1} \ldots x_{m}^{2 \alpha_m} \exp(\ux^2) {x \grave{}}_{1} \ldots {x \grave{}}_{2n} + l.o.t.
\]

\noindent
By definition we now have that

\[
\int_{B} f = \int_{\mR^m} x_{1}^{2 \alpha_1} \ldots x_{m}^{2 \alpha_m} \exp(\ux^2) \, dx
\]
from which the theorem follows.
\end{proof}

\begin{remark}
Note that the set of functions $f = R(x) \exp(x^2)$, considered in theorem \ref{theoremberezin}, is dense in e.g. $\cS \otimes \Lambda^{2n}$, with $\cS$ the space of rapidly decreasing functions in $\mR^m$. 
\end{remark}

In this section we thus have obtained a new way of defining integration on superspace for a sufficiently large set of functions, without resorting to the ad hoc formulation of Berezin. Note that also other attempts have been made to connect the Berezin integral with more familiar types of integration, such as given in \cite{MR784620} and \cite{MR825156} using contour integrals.

\section{Conclusions}

In this paper we have further developed a new approach to superspace. We have constructed a theory of spherical harmonics which is very similar to the classical theory in $\mR^m$. Using an old result of Pizzetti, we were able to define an integral over the supersphere, parametrised by our superdimension $M$. This integral appeared to be quite useful because it allowed us to prove a.o. orthogonality of spherical harmonics of different degree, a mean value property, Green-like theorems etc.
In this way also the important Funk-Hecke theorem could be extended to superspace. As a consequence the study of the super spherical Fourier and Radon transforms was initiated, again in complete correspondance with the classical results. 

Finally we extended our integral over the supersphere to an integral over the whole superspace, inspired by the idea of integration in spherical co-ordinates. In the resulting formula, the superdimension $M$ no longer appeared. Moreover we were able to prove that our integration recipe is equivalent to the Berezin integral. This shows that our approach is in correspondance with the classical one. It also gives a stronger motivation for certain definitions, due to the resemblance to the classical case.

\ack The first author would like to thank Henri Verschelde for a discussion leading to the results on the Berezin integral. The first author is a research assistant supported by the Fund for Scientific Research Flanders (F.W.O.-Vlaanderen).

\vspace{4mm}


\begin{thebibliography}{10}

\bibitem{MR0208930}
Berezin~F A.
\newblock {\em The method of second quantization}.
\newblock Translated from the Russian by Nobumichi Mugibayashi and Alan
  Jeffrey. Pure and Applied Physics, Vol. 24. Academic Press, New York, 1966.

\bibitem{MR732126}
Berezin~F A.
\newblock {\em Introduction to algebra and analysis with anticommuting
  variables}.
\newblock Moskov. Gos. Univ., Moscow, 1983.
\newblock With a preface by A. A. Kirillov.

\bibitem{MR565567}
Le{\u\i}tes~D A.
\newblock Introduction to the theory of supermanifolds.
\newblock {\em Uspekhi Mat. Nauk}, 35(1(211)):3--57, 255, 1980.

\bibitem{MR0580292}
Kostant B.
\newblock Graded manifolds, graded {L}ie theory, and prequantization.
\newblock In {\em Differential geometrical methods in mathematical physics
  (Proc. Sympos., Univ. Bonn, Bonn, 1975)}, pages 177--306. Lecture Notes in
  Math., Vol. 570. Springer, Berlin, 1977.

\bibitem{MR778559}
DeWitt B.
\newblock {\em Supermanifolds}.
\newblock Cambridge Monographs on Mathematical Physics. Cambridge University
  Press, Cambridge, 1984.

\bibitem{MR574696}
Rogers A.
\newblock A global theory of supermanifolds.
\newblock {\em J. Math. Phys.}, 21(6):1352--1365, 1980.

\bibitem{MR848645}
Rogers A.
\newblock Graded manifolds, supermanifolds and infinite-dimensional {G}rassmann
  algebras.
\newblock {\em Comm. Math. Phys.}, 105(3):375--384, 1986.

\bibitem{MR697564}
Brackx F, Delanghe R, and Sommen F.
\newblock {\em Clifford analysis}, volume~76 of {\em Research Notes in
  Mathematics}.
\newblock Pitman (Advanced Publishing Program), Boston, MA, 1982.

\bibitem{MR1169463}
Delanghe R, Sommen F, and Sou{\v{c}}ek V.
\newblock {\em Clifford algebra and spinor-valued functions}, volume~53 of {\em
  Mathematics and its Applications}.
\newblock Kluwer Academic Publishers Group, Dordrecht, 1992.

\bibitem{MR1130821}
Gilbert~J E and Murray M.
\newblock {\em Clifford algebras and {D}irac operators in harmonic analysis},
  volume~26 of {\em Cambridge Studies in Advanced Mathematics}.
\newblock Cambridge University Press, Cambridge, 1991.

\bibitem{MR1472163}
Sommen F.
\newblock An algebra of abstract vector variables.
\newblock {\em Portugal. Math.}, 54(3):287--310, 1997.

\bibitem{MR2089988}
Colombo F, Sabadini I, Sommen F, and Struppa~D C.
\newblock {\em Analysis of {D}irac systems and computational algebra},
  volume~39 of {\em Progress in Mathematical Physics}.
\newblock Birkh\"auser Boston Inc., Boston, MA, 2004.

\bibitem{DBS1}
De~Bie H and Sommen F.
\newblock Correct rules for clifford calculus on superspace.
\newblock {\em Accepted for publication in Adv. Appl. Clifford Algebras}.

\bibitem{MR1771370}
Sommen F.
\newblock An extension of {C}lifford analysis towards super-symmetry.
\newblock In {\em Clifford algebras and their applications in mathematical
  physics, Vol. 2 (Ixtapa, 1999)}, volume~19 of {\em Progr. Phys.}, pages
  199--224. Birkh\"auser Boston, Boston, MA, 2000.

\bibitem{PIZZETTI}
Pizzetti P.
\newblock Sulla media dei valori che una funzione dei punti dello spazio assume
  alla superficie di una sfera.
\newblock {\em Rend. Lincei}, 18:182–--185, 1909.

\bibitem{MR1151617}
Howe R and Tan E.
\newblock {\em Nonabelian harmonic analysis}.
\newblock Universitext. Springer-Verlag, New York, 1992.
\newblock Applications of ${\rm SL}(2,{\bf R})$.

\bibitem{MR1658723}
Grosse H and Reiter G.
\newblock The fuzzy supersphere.
\newblock {\em J. Geom. Phys.}, 28(3-4):349--383, 1998.

\bibitem{MR1412143}
Groemer H.
\newblock {\em Geometric applications of {F}ourier series and spherical
  harmonics}, volume~61 of {\em Encyclopedia of Mathematics and its
  Applications}.
\newblock Cambridge University Press, Cambridge, 1996.

\bibitem{DBS2}
De~Bie H and Sommen F.
\newblock Fischer decompositions in superspace.
\newblock {\em Submitted to Proceedings of the 14th ICFIDCA}.

\bibitem{MR0229863}
Vilenkin~N J.
\newblock {\em Special functions and the theory of group representations}.
\newblock Translated from the Russian by V. N. Singh. Translations of
  Mathematical Monographs, Vol. 22. American Mathematical Society, Providence,
  R. I., 1968.

\bibitem{MR1426416}
Baker~J A.
\newblock Integration over spheres and the divergence theorem for balls.
\newblock {\em Amer. Math. Monthly}, 104(1):36--47, 1997.

\bibitem{MR1837866}
Folland G.
\newblock How to integrate a polynomial over a sphere.
\newblock {\em Amer. Math. Monthly}, 108(5):446--448, 2001.

\bibitem{MR0499342}
Hochstadt H.
\newblock {\em The functions of mathematical physics}.
\newblock Wiley-Interscience [A division of John Wiley \& Sons, Inc.], New
  York-London-Sydney, 1971.
\newblock Pure and Applied Mathematics, Vol. XXIII.

\bibitem{MR784620}
Rabin~J M.
\newblock The {B}erezin integral as a contour integral.
\newblock {\em Phys. D}, 15(1-2):65--70, 1985.
\newblock Supersymmetry in physics (Los Alamos, N.M., 1983).

\bibitem{MR825156}
Rogers A.
\newblock Realizing the {B}erezin integral as a superspace contour integral.
\newblock {\em J. Math. Phys.}, 27(3):710--717, 1986.

\end{thebibliography}
\end{document}